\documentclass[journal]{IEEEtran}
\usepackage{amsmath,amsfonts,amsthm}
\usepackage{array}
\usepackage[caption=false,font=normalsize,labelfont=sf,textfont=sf]{subfig}
\usepackage{textcomp}
\usepackage{stfloats}
\usepackage{url}
\usepackage{verbatim}
\usepackage{graphicx}
\usepackage{xcolor}
\usepackage{nomencl}
\usepackage{etoolbox}
\usepackage{ifthen}
\usepackage{booktabs}
\usepackage[ruled]{algorithm2e}

\usepackage{booktabs} 
\usepackage{amssymb}  

\usepackage{multirow}
\usepackage{bm}
\usepackage{amsmath}
\usepackage{makecell}
\usepackage{enumitem}

\allowdisplaybreaks 
\usepackage[subtle,tracking=normal]{savetrees}

\usepackage{cite}

\usepackage{xcolor}  

\makenomenclature
\def\BibTeX{{\rm B\kern-.05em{\sc i\kern-.025em b}\kern-.08em
    T\kern-.1667em\lower.7ex\hbox{E}\kern-.125emX}}
\usepackage{balance}

\newtheorem{theorem}{Theorem}

\newtheorem{prop}[theorem]{Proposition}
\theoremstyle{definition}
\newtheorem{remark}{Remark}

\begin{document}

\bstctlcite{IEEEexample:BSTcontrol}

\title{Process-Knowledge-Embedded Safe DRL  for \\ Real-Time Dispatch of Process Loads  in \\ Industrial Microgrids}

\author{Daniyaer Paizulamu, \IEEEmembership{Graduate Student Member, IEEE}, 
Lin Cheng, \IEEEmembership{Senior Member, IEEE}, \\Fashun Shi, Yuchi Zhang, Zhaoyang Dong, \IEEEmembership{Fellow, IEEE}
\thanks{
This work was supported by Science and Technology Program of Xinjiang Uyghur Autonomous Region (No. 2024B010012). (\textit{Corresponding author: Zhaoyang Dong.})}

\thanks{Daniyaer Paizulamu, Lin Cheng and Fashun Shi are with the State Key Laboratory of Power System Operation and Control, Department of Electrical Engineering, Tsinghua University, Beijing 100084, China (e-mail: dnyepzlm22@mails.tsinghua.edu.cn; chenglin@mail.tsinghua.edu.cn; shi\_fashun@tsinghua.edu.cn).}
\thanks{Yuchi Zhang and Zhaoyang Dong are with the Department of Electrical Engineering, City University of Hong Kong, Hong Kong (e-mail: yzhang7522-c@my.cityu.edu.hk; zydong@cityu.edu.hk).}}

\markboth{IEEE TRANSACTIONS ON SUSTAINABLE ENERGY,~Vol.~X, No.~X, XX Month~2026}
{Paizulamu \MakeLowercase{\textit{et al.}}: SPLs Real-Time Dispatch}

\IEEEaftertitletext{\vspace{-2\baselineskip}}

\maketitle

\begin{abstract}
Steelmaking process loads (SPLs) are flexible resources that enhance local renewable-energy utilization and reduce electricity procurement costs in industrial microgrids. However, strong multistage coupling makes current decisions affect subsequent feasibility, challenging conventional deep reinforcement learning to reduce costs while maintaining process feasibility throughout production. This paper proposes a
process-knowledge-embedded safe deep reinforcement learning framework for the real-time dispatch of SPLs in industrial microgrids. Specifically,
a lossless active-frontier action space is constructed, and a
process-distance-guided action-processing mechanism reallocates
excluded-action probabilities according to process distance and the
actor's safe-action preference. Recursive process feasibility is
established to guarantee admissible execution and feasible continuation.
Furthermore, the expected process-correction distance is incorporated
into PPO through a correction budget and a primal-dual update to
internalize process knowledge into the raw policy, while a derived bound
quantifies the raw policy's dependence on safety processing. Case studies using real-world data demonstrate zero process losses, electricity-cost reductions of 49.2\% and 25.9\%
relative to rule-based scheduling and rolling MILP, respectively, within an acceptable computation time.
\end{abstract}

\begin{IEEEkeywords}
Industrial microgrid, process load, safe deep reinforcement learning, knowledge-embedded.
\end{IEEEkeywords}
\mbox{}

\vspace{-0.5cm}
\section{Introduction}\label{Introduction}

\IEEEPARstart{R}{eal-time} dispatch of steelmaking process loads
(SPLs) in response to fluctuations in local renewable generation and
grid electricity prices is essential for increasing on-site renewable
energy utilization and reducing electricity procurement costs in
industrial microgrids~\cite{TOP1, TOP4}. SPLs are typically characterized by
large power ratings, continuous operation, strong inter-stage coupling,
and high start-stop costs. Their flexibility potential coexists with
stringent production constraints~\cite{TOP2}. Therefore, safe and
efficient scheduling of SPLs is critical for converting process
flexibility into reliable demand-side regulation while preserving
production continuity and output requirements~\cite{TOP3}.

Existing studies on flexible process-load scheduling can be broadly classified into: Dispatching rule-based methods\cite{Rule}, process systems engineering models based on RTN\cite{RTN}, CRTN\cite{CRTN} formulations, rolling MILP/MPC-based methods\cite{MILP,MPC}, and reinforcement learning-based methods\cite{DRLtop1,DRLtop2}. The first two categories rely heavily on day-ahead forecasts and cannot readily embed process knowledge into online action generation, making them unsuitable for intra-day adjustment in response to electricity price variations or renewable generation fluctuations. Rolling optimization methods offer good interpretability and feasibility, but require repeated optimization in large-scale, highly constrained, and uncertain environments. The computational burden limits real-time applicability and makes it difficult to meet 5-min scale dispatch requirements.

Under conditions of severe uncertainty and high-frequency scheduling, Deep Reinforcement Learning (DRL) exhibits substantial potential for real-time decision-making~\cite{DRL9}. DRL has been applied to real-time pricing-based industrial facility management~\cite{DRL1}, discrete-manufacturing demand
response~\cite{DRL2}, and heavy-industry energy
management~\cite{DRL16}. In steel-related applications, DRL has been
used for multi-objective oxygen-system scheduling~\cite{DRL7},
hybrid discrete--continuous scheduling of oxygen systems
~\cite{DRL3}, and electric arc furnace (EAF)-based steel-plant scheduling under electricity
price and on-site renewable-generation uncertainties
~\cite{DRL8}. However, conventional DRL does not guarantee
satisfaction of hard process constraints.

To overcome the inherent limitations of traditional DRL in guaranteeing safety constraints, safe deep reinforcement learning (SDRL) provides a novel paradigm for complex industrial scheduling and energy management~\cite{SRL0}. In \cite{SRL1}, real-time pricing-driven manufacturing demand response was formulated as a constrained Markov decision process with a hybrid action space, wherein a cross-attention mechanism and a Lagrangian Soft Actor-Critic (SAC) framework were employed to handle hybrid action coupling and safety constraints. Relatedly, process knowledge was integrated into SDRL for the real-time scheduling of multi-product gas supply networks in \cite{SRL2}, while an evolution-assisted SDRL approach was utilized for real-time production optimization under uncertainties in industrial rotary kilns in \cite{SRL3}. Furthermore, SDRL was applied to the optimal control of metallurgical processes in \cite{SRL4}. In the domain of microgrid energy management \cite{SRL7}, SDRL was applied to secure and economic dispatch involving multi-energy coupling and network constraints, from the perspectives of spatio-temporal awareness \cite{SRL5} and physics-informed safety layers \cite{SRL6}, respectively.

However, existing SDRL methods provide only partial support for
the real-time dispatch of SPLs. Constraint-based approaches
mainly regulate expected cumulative violations, whereas action-level
mechanisms typically filter, project, or replace infeasible actions
without fully accounting for multistage process logic, process-correction
distances, or feasible process continuation. Process knowledge therefore
remains largely external to the raw policy, which may continue assigning
substantial probability to infeasible actions and require repeated safety
intervention. Penalty-based designs may further suppress violations
through under-production, fewer startups, or reduced switching, thereby
compromising production-quota fulfillment and yielding overly
conservative schedules.

This paper develops a process-knowledge-embedded safe deep
reinforcement learning (PK-SDRL) framework for the real-time dispatch
of SPLs under electricity-price and local renewable-generation
uncertainties, while maintaining multistage process feasibility and
meeting prescribed production requirements. The main contributions
are summarized as follows:

\begin{enumerate}

\item \textit{Process-Distance-Guided Action Processing:}
A process-distance-guided action-processing (PDG-AP) mechanism is
developed to reallocate excluded-action probabilities toward feasible
actions according to process distance and the actor's safe-action
preference. Recursive process feasibility is established to ensure
admissible execution and feasible process continuation.

\item \textit{Parameterized-Action PPO with Process-Knowledge
Internalization:}
The expected process-correction distance is incorporated into PPO
through a process-correction budget and a primal--dual update. The
derived raw-policy process-infeasibility bound quantifies and limits the
actor's dependence on safety processing.

\item \textit{PK-SDRL Dispatch Framework:}
A PK-SDRL framework is developed for real-time dispatch at a
5-min resolution under a two-part electricity tariff and local
renewable-generation uncertainties. The lossless active-frontier action space, parameterized discrete-continuous policy, state-transition shaping reward, and
terminal quota reward jointly coordinate economic operation, process
advancement, and daily quota fulfillment.

\item \textit{Ablation and Comparative Performance Evaluation:}
The framework is evaluated on a multiline EAF--LF--CC steel plant using
real-world renewable-generation data and two-part tariff. Ablation, comparative, and sensitivity studies assess process feasibility,
quota fulfillment, economic performance, online decision time, and
robustness to forecast errors.

\end{enumerate}

The remainder of this paper is organized as follows.
Section~\ref{Modeling} introduces the process-constrained modeling of
SPLs. The methodology of the proposed PK-SDRL framework is presented
in Section~\ref{method}. Section~\ref{Case Study} validates the proposed method using real-world data. Finally, conclusions are summarized in Section~\ref{CONCLUSION}.

\section{Process-Constrained Modeling of SPLs}\label{Modeling}

This section first characterizes the dispatch modes of industrial process loads and then formulates a process-constrained dispatch model for steelmaking process loads (SPLs) in a short-process steelmaking plant following the electric arc furnace (EAF)--ladle furnace (LF)--continuous casting (CC) route.

\subsection{Dispatch Modes for Process Loads}
\label{Dispatch Modes}

According to their controllability in the power and temporal dimensions, process loads (PLs) can be classified into three fundamental dispatch modes: adjustable, transferable, and shiftable, as illustrated in Fig.~\ref{response_mode}. Adjustable loads vary their operating power within admissible limits during processing. Transferable loads allow production tasks to be reassigned among eligible parallel units or production lines. Shiftable loads permit task start times to be advanced or delayed within process-feasible time windows. These modes may coexist within the same production stage and jointly determine its dispatch capability. For example, an EAF batch may combine adjustable operating power with a shiftable start time, while its admissible response remains constrained by downstream refining, transfer, and continuous-casting requirements.

\begin{figure}[htbp] 
    \vspace{-0.2cm}
    \centerline{\includegraphics[width=0.9\columnwidth]{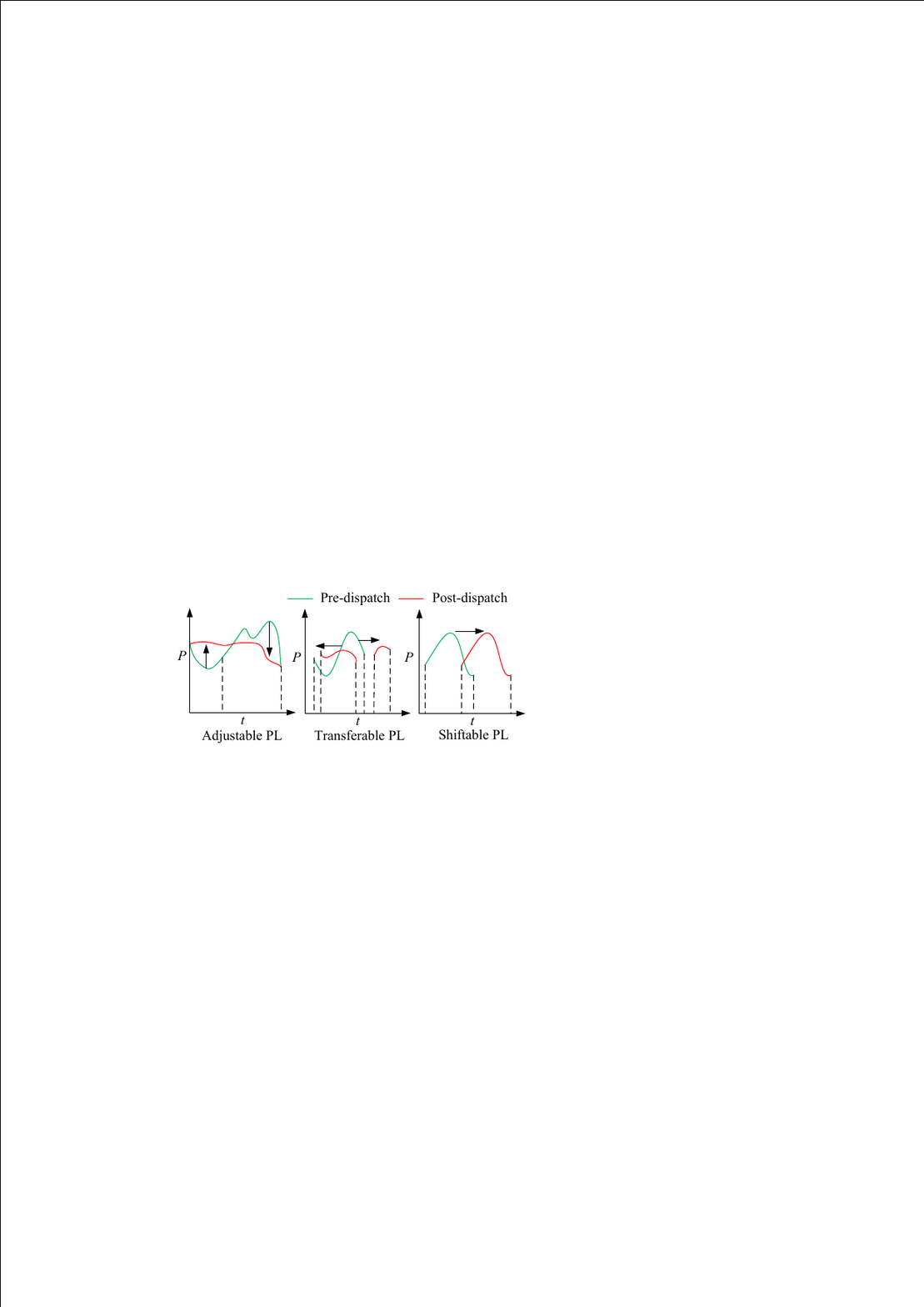}}
    \caption{Classification of dispatch modes of process loads.}
    \label{response_mode}
    \vspace{-0.3cm}
\end{figure}

\subsection{EAF--LF--CC Process Configuration}
\label{Process Configuration}

The considered multiline short-process steelmaking plant follows the EAF--LF--CC route. Scrap is first melted in an EAF, subsequently refined in an LF, and finally solidified through continuous casting. Each production batch, hereafter referred to as a heat, must pass through these stages sequentially. The corresponding SPLs comprise the electrical demands of the processing stages along this route.

Let $\mathcal L=\{1,2,\ldots,L\}$ denote the set of production lines, and let $\mathcal K_{\ell}$ denote the set of heats assigned to line $\ell$. Each line is equipped with an EAF, an LF, and a CC, whose device set is denoted by $\mathcal S_{\ell}$. The complete device and heat sets are given by:
\begin{equation}
\mathcal S=
\bigcup_{\ell\in\mathcal L}
\mathcal S_{\ell},
\quad
\mathcal K=
\bigcup_{\ell\in\mathcal L}
\mathcal K_{\ell},
\end{equation}

\noindent where $\mathcal S_{\ell}=\{s_{\ell}^{\mathrm{EAF}},
s_{\ell}^{\mathrm{LF}},s_{\ell}^{\mathrm{CC}}\}$. Let $\mathcal A\subseteq\mathcal S\times\mathcal S$ denote the set of adjacent upstream-downstream device pairs, where $(s,r)\in\mathcal A$ indicates that device $r$ immediately follows device $s$ in the production route.

\subsection{Processing Dynamics and Operational Constraints of SPLs}
\label{SPL Model}

The operation of SPLs is characterized by the coupling between device power consumption, batch-processing progress, and interstage production requirements. The corresponding processing-energy dynamics, device-level operating constraints, and interstage timing constraints are formulated below.

\subsubsection{Processing Dynamics}

Let $\delta_{k,t}^{\ell,s}\in\{0,1\}$ indicate whether device $s$ is processing heat $k$ on line $\ell$ during interval $t$, and let $P_{k,t}^{\ell,s}$ denote the corresponding power consumption. For each processing stage, $E_{k,t}^{\ell,s}$ denotes the cumulative electrical energy delivered to heat $k$ by the end of interval $t$, with $E_{k,0}^{\ell,s}=0$. The state update equation is given by:
\begin{equation}
\begin{aligned}
E_{k,t}^{\ell,s}
=E_{k,t-1}^{\ell,s}+P_{k,t}^{\ell,s}\Delta t, \quad\forall \ell\in\mathcal L,
k\in\mathcal K_{\ell}, s\in\mathcal S_{\ell}, t\in\mathcal T,
\end{aligned}
\end{equation}

\noindent where $\Delta t$ is the 5-min dispatch interval, and $\mathcal T$ denotes the complete set of dispatch intervals. The processing-energy state satisfies $0\le E_{k,t}^{\ell,s}\le\overline E_{k}^{\ell,s}$, where $\overline E_{k}^{\ell,s}$ is the electrical energy required to complete the corresponding processing stage.

\subsubsection{Device Operation Constraints}

The operating power of each device is bounded by:
\begin{equation}
\delta_{k,t}^{\ell,s}\underline P^{s}
\le
P_{k,t}^{\ell,s}
\le
\delta_{k,t}^{\ell,s}\overline P^{s},
\quad
\forall \ell\in\mathcal L,\ 
k\in\mathcal K_{\ell},\ 
s\in\mathcal S_{\ell},\ 
t\in\mathcal T,
\end{equation}

\noindent where $\underline P^{s}$ and $\overline P^{s}$ denote the minimum and maximum operating powers of device $s$, respectively. For a fixed-power device, $\underline P^{s}=\overline P^{s}$. Each device processes at most one heat during each dispatch interval:
\begin{equation}
\sum_{k\in\mathcal K_{\ell}}
\delta_{k,t}^{\ell,s}
\le
1,
\quad
\forall \ell\in\mathcal L,\ 
s\in\mathcal S_{\ell},\ 
t\in\mathcal T.
\end{equation}

For heat $k$ processed on device $s$, let
$t_{k,\mathrm{st}}^{\ell,s}$ and
$t_{k,\mathrm{fn}}^{\ell,s}$ denote its start and finish intervals, respectively. Each processing task occupies one continuous and noninterruptible interval:
\begin{subequations}
\begin{gather}
t_{k,\mathrm{st}}^{\ell,s}-M\left(1-\delta_{k,t}^{\ell,s}\right)
\le t \le t_{k,\mathrm{fn}}^{\ell,s}+M\left(1-\delta_{k,t}^{\ell,s}\right),
\\
\sum_{t\in\mathcal T}\delta_{k,t}^{\ell,s}=t_{k,\mathrm{fn}}^{\ell,s}
-t_{k,\mathrm{st}}^{\ell,s}+1, \quad
1\le t_{k,\mathrm{st}}^{\ell,s}\le t_{k,\mathrm{fn}}^{\ell,s}\le
|\mathcal T|,
\end{gather}
\end{subequations}

\noindent $t_{k,\mathrm{st}}^{\ell,s},\ t_{k,\mathrm{fn}}^{\ell,s}\in
\mathbb Z_{+}$, $\ell\in\mathcal L$, $k\in\mathcal K_{\ell}$,
$s\in\mathcal S_{\ell}$, and $t\in\mathcal T$, where
$M\ge|\mathcal T|$ is a sufficiently large constant.

\subsubsection{Interstage Timing and Production Requirements}

For each adjacent device pair $(s,r)\in\mathcal A$, heat $k$ can enter device $r$ only after completing its processing on device $s$. The interstage transfer and waiting time is constrained by:
\begin{equation}
\underline{\tau}_{\mathrm{tr}}^{s,r}
\le
\left(
t_{k,\mathrm{st}}^{\ell,r}
-
t_{k,\mathrm{fn}}^{\ell,s}
-
1
\right)\Delta t
\le
\overline{\tau}_{\mathrm{tr}}^{s,r},
\quad
\forall \ell\in\mathcal L,\ k\in\mathcal K_{\ell},
\end{equation}

\noindent where $\underline{\tau}_{\mathrm{tr}}^{s,r}$ and
$\overline{\tau}_{\mathrm{tr}}^{s,r}$ denote the minimum transfer time and the maximum admissible interval between adjacent processing stages, respectively. The processing duration of each heat is restricted by:
\begin{equation}
\underline{\tau}_{k}^{\ell,s}\le\left(t_{k,\mathrm{fn}}^{\ell,s}
-t_{k,\mathrm{st}}^{\ell,s}+1\right)\Delta t \le
\overline{\tau}_{k}^{\ell,s},\
\forall \ell\in\mathcal L,
k\in\mathcal K_{\ell},
s\in\mathcal S_{\ell},
\end{equation}

\noindent where $\underline{\tau}_{k}^{\ell,s}$ and
$\overline{\tau}_{k}^{\ell,s}$ denote the minimum and maximum admissible processing durations, respectively. A heat is regarded as completed when its CC processing-energy
requirement is satisfied. Accordingly, the number of completed heats
at the end of the dispatch horizon $N_{|\mathcal T|}^{\mathrm{fin}}$ is defined as:
\begin{equation}
N_{|\mathcal T|}^{\mathrm{fin}}
=
\sum_{\ell\in\mathcal L}
\sum_{k\in\mathcal K_{\ell}}
\mathbb I
\left\{
E_{k,|\mathcal T|}^{\ell,s_{\ell}^{\mathrm{CC}}}
=
\overline E_{k}^{\ell,s_{\ell}^{\mathrm{CC}}}
\right\}.
\end{equation}

\begin{figure}[htbp] 
    \vspace{-0.1cm}
    \centerline{\includegraphics[width=1\columnwidth]{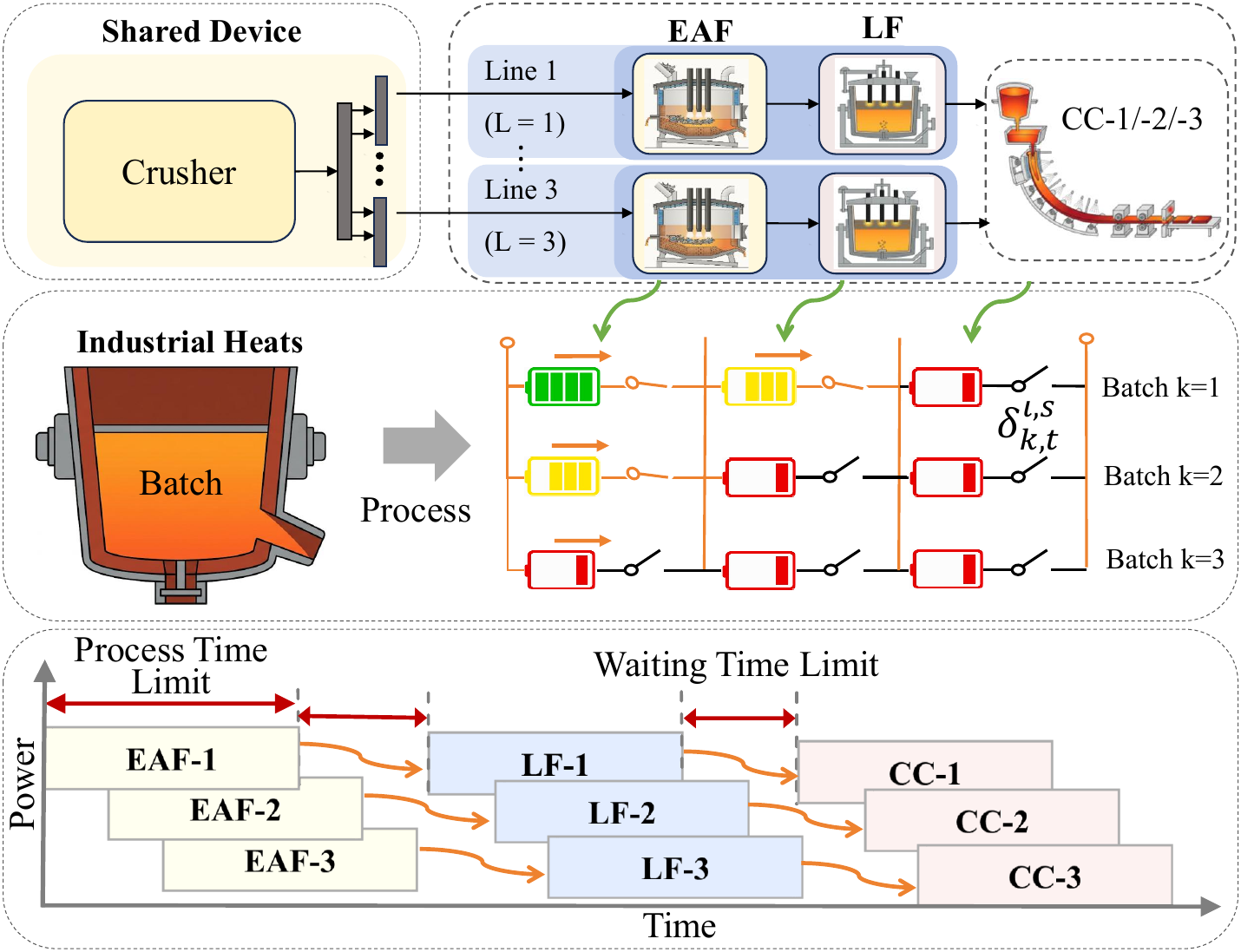}}
    \caption{Modeling diagram for steelmaking plants.}
    \label{VES_model}
    \vspace{-0.5cm}
\end{figure}

\section{PK-SDRL Dispatch Framework}\label{method}

This section develops the process-knowledge-embedded safe deep
reinforcement learning (PK-SDRL) framework for real-time dispatch
of SPLs under a two-part tariff with real-time electricity prices
and local renewable generation. At each 5-min decision interval, the dispatch objective is to reduce electricity procurement cost and demand-exceedance risk while satisfying process constraints. The expected process-correction distance induced by PDG-AP is further incorporated into policy updating to guide the raw actor toward the process-feasible action support. The overall framework of the proposed PK-SDRL framework is illustrated in Fig.~\ref{Flowchart}.

\subsection{Markov Decision Process Formulation} 

The intraday dispatch problem is modeled as a Markov decision process:
\begin{equation}
\mathcal M=(\mathcal X,\mathcal U,\mathbb P,r,\gamma),
\end{equation}
where $\mathcal X$, $\mathcal U$, $\mathbb P$, $r$, and $\gamma$ denote the state space, action space, transition kernel, instantaneous reward, and discount factor, respectively. Each episode corresponds to one operating day, and each decision step represents a 5-min interval. The environment state at time $t$ is constructed from the previous plant state $\Xi_{t-1}=[\{\delta_{k,t-1}^{\ell,s}\},\{P_{k,t-1}^{\ell,s}\}, \{E_{k,t-1}^{\ell,s}\}]$ and short-term forecast information, and a compact state representation is adopted:
\begin{equation}
\mathbf s_t=
\Big[
\Xi_{t-1},\,
\{\hat\lambda_{\tau|t},\hat P_{\tau|t}^{\mathrm{R}}\}_{\tau=t}^{t+H_s-1},\,
\boldsymbol{\eta}_t
\Big],
\end{equation}

\noindent where $\hat\lambda_{\tau|t}$ and $\hat P_{\tau|t}^{\mathrm{R}}$ denote the electricity-price and renewable-generation forecasts over the short look-ahead horizon $H_s$, respectively. The vector $\boldsymbol{\eta}_t$ collects the remaining daily production quota, the remaining time and the headroom relative to the contract demand $D$.

\subsection{Active Frontier and Action Space Construction}

To reduce the action-space dimension, decisions are restricted to the active-frontier heats. For each device $s\in\mathcal S_\ell$, the active frontier is defined as:
\begin{equation}
\mathcal F_t^{\ell,s}
=
\{k_t^{\ell,s,\mathrm{cur}}\}
\cup
\{k_t^{\ell,s,\mathrm{next}}\}, \
\forall \ell\in\mathcal L,\ 
s\in\mathcal S_{\ell},\ 
t\in\mathcal T,
\end{equation}
where $k_t^{\ell,s,\mathrm{cur}}$ and $k_t^{\ell,s,\mathrm{next}}$ denote the heat currently processed and the next eligible heat for device $s$ at time $t$, respectively. Eligibility is determined by the process constraints in Section~\ref{SPL Model}. The hybrid action is defined as:
\begin{equation}
a_t=\left(a_t^x,\mathbf P_t\right),
\end{equation}

\noindent where $a_t^x=\{\delta_{k,t}^{\ell,s}\},\ \mathbf P_t=\{P_{k,t}^{\ell,s}\},\ {k\in\mathcal F_t^{\ell,s}}$ collect the operating decisions and power setpoints, respectively, for the current and next heats on each device.

\begin{remark}
\emph{
Under the prescribed processing order, every feasible hybrid action is supported on the active frontier: $(\delta_{k,t}^{\ell,s},P_{k,t}^{\ell,s})=(0,0)$ for all $k\in\mathcal K_\ell\setminus\mathcal F_t^{\ell,s}$. Consequently, restricting decisions to the active frontier is lossless relative to full-batch action construction.}
\end{remark}

\subsection{Process-Distance-Guided Action Processing}
\label{subsec:pdgam}

Let $a,b\in\mathcal U^x$ denote candidate discrete connection actions. Given the current plant state $\Xi_t$, define the safe and excluded action sets as $\mathcal U_t^{\mathrm{s}}$ and $\mathcal U_t^{\mathrm{e}}=\mathcal U^x\setminus\mathcal U_t^{\mathrm{s}}$, respectively. A discrete action belongs to $\mathcal U_t^{\mathrm{s}}$ only if it satisfies constraints (4)--(7). Let $\pi_\theta^x(\cdot\mid\mathbf s_t)\in\Delta(\mathcal U^x)$ denote the raw discrete policy. Assuming $\mathcal U_t^{\mathrm{s}}\neq\varnothing$, its total probability over the safe set is $Z_{\theta,t}=\sum_{b\in\mathcal U_t^{\mathrm{s}}}\pi_\theta^x(b\mid\mathbf s_t)$, and the normalized safe reference policy is:
\begin{equation}
\bar\pi_{\theta,t}^x(b\mid\mathbf s_t)=\pi_\theta^x(b\mid\mathbf s_t)/Z_{\theta,t}, \quad  b\in\mathcal U_t^{\mathrm{s}}.
\end{equation}

A standard hard mask directly uses $\bar\pi_{\theta,t}^x(\cdot\mid\mathbf s_t)$ for sampling. Equivalently, the probability mass removed from every excluded action is allocated to safe actions according to the same reference distribution, without distinguishing which safe action constitutes a smaller process correction. To incorporate this distinction, let $\boldsymbol\delta_t(a)\in\{0,1\}^{n_t^x}$ collect the active-frontier operating states induced by action $a$. The normalized process distance between $a\in\mathcal U_t^{\mathrm{e}}$ and $b\in\mathcal U_t^{\mathrm{s}}$ is defined as:
\begin{equation}
d_t(a,b)=\|\boldsymbol\delta_t(a)-\boldsymbol\delta_t(b)\|_1/n_t^x,
\end{equation}

\noindent where $n_t^x$ is the dimension of $\boldsymbol\delta_t(a)$. Thus, $d_t(a,b)\in[0,1]$ measures the fraction of active-frontier operating decisions that must be changed when replacing $a$ with $b$. For each excluded action $a\in\mathcal U_t^{\mathrm{e}}$, its probability allocation weights are determined by:
\begin{equation}
\begin{aligned}
\omega_{\theta,t}^{\star}(\cdot\mid a)
=\arg\min_{q\in\Delta(\mathcal U_t^{\mathrm{s}})}
\Big\{
&\mathbb E_{b\sim q}[d_t(a,b)]\\
&+\tau_m D_{\mathrm{KL}}\!\left(
q\,\|\,\bar\pi_{\theta,t}^x(\cdot\mid\mathbf s_t)
\right)
\Big\},
\end{aligned}
\label{eq:pdgam_weight}
\end{equation}
\noindent where $\Delta(\mathcal U_t^{\mathrm{s}})$ is the probability simplex over $\mathcal U_t^{\mathrm{s}}$, $D_{\mathrm{KL}}(\cdot\|\cdot)$ denotes the Kullback--Leibler divergence, and $\tau_m>0$ is a tradeoff coefficient. The first term represents a process-near safe substitute, whereas the second prevents the allocation from departing excessively from the actor's safe-action preference. Since the raw softmax policy is positive on $\mathcal U_t^{\mathrm{s}}$, the objective in \eqref{eq:pdgam_weight} is strictly convex and admits the unique solution:
\begin{equation}
\omega_{\theta,t}^{\star}(b\mid a)
=\frac{
\bar\pi_{\theta,t}^{x}(b\mid\mathbf s_t)
\exp\!\left[-d_t(a,b)/\tau_m\right]
}{
\displaystyle\sum_{c\in\mathcal U_t^{\mathrm{s}}}
\bar\pi_{\theta,t}^{x}(c\mid\mathbf s_t)
\exp\!\left[-d_t(a,c)/\tau_m\right]
}, 
\label{eq:pdgam_closed_form}
\end{equation}

\noindent where $a\in\mathcal U_t^{\mathrm{e}}, b\in\mathcal U_t^{\mathrm{s}}$, the resulting safety-processed policy is expressed as:
\begin{equation}
\tilde\pi_{\theta}^{x}(b\mid\mathbf s_t)
=\pi_{\theta}^{x}(b\mid\mathbf s_t)
+\displaystyle\sum_{a\in\mathcal U_t^{\mathrm{e}}}
\pi_{\theta}^{x}(a\mid\mathbf s_t)
\omega_{\theta,t}^{\star}(b\mid a), \; b\in\mathcal U_t^{\mathrm{s}},
\label{eq:pdgam_policy}
\end{equation}

\noindent where $\tilde\pi_{\theta}^{x}(b\mid\mathbf s_t)=0, b\in\mathcal U_t^{\mathrm{e}}$. Since $\omega_{\theta,t}^{\star}(\cdot\mid a)$ is a probability distribution, \eqref{eq:pdgam_policy} preserves the total probability mass and assigns zero probability to all excluded actions. Using $\bar\pi_{\theta,t}^x(\cdot\mid\mathbf s_t)$ as a feasible point of \eqref{eq:pdgam_weight} gives:
\begin{equation}
\mathbb E_{b\sim\omega_{\theta,t}^{\star}(\cdot\mid a)}
[d_t(a,b)]
\le
\mathbb E_{b\sim\bar\pi_{\theta,t}^{x}(\cdot\mid\mathbf s_t)}
[d_t(a,b)],
\quad a\in\mathcal U_t^{\mathrm{e}}.
\label{eq:pdgam_distance_bound}
\end{equation}

\begin{prop}[Recursive process feasibility under PDG-AP]
\label{prop:pdgam_recursive_feasibility}
Suppose that $\Xi_0$ is process feasible. For every
$\ell\in\mathcal L$ and every pair of consecutive heats
$k,k+1\in\mathcal K_\ell$, the configured processing-time bounds
satisfy
$\overline{\tau}_{k}^{\ell,\mathrm{LF}}
\le
\underline{\tau}_{k+1}^{\ell,\mathrm{EAF}}$
and
$\overline{\tau}_{k}^{\ell,\mathrm{LF}}
+
\overline{\tau}_{k}^{\ell,\mathrm{CC}}
\le
\underline{\tau}_{k+1}^{\ell,\mathrm{EAF}}
+
\overline{\tau}_{k+1}^{\ell,\mathrm{LF}}$.
If
$a_t^x\sim\tilde{\pi}_{\theta}^{x}(\cdot\mid\mathbf s_t)$
and $\Xi_t=F\bigl(\Xi_{t-1}, a_t^x, \mathbf P_t^{\star} \bigr)$, then:
\begin{equation}
\Pr\!\left(
\left\{
a_t^x\in\mathcal U_t^{\mathrm{s}}(\Xi_{t-1})
\right\}
\cap
\left\{
\mathcal U_{t+1}^{\mathrm{s}}(\Xi_t)\neq\varnothing
\right\}
\,\middle|\,
\mathbf s_t
\right)
=1,
\quad
t>0.
\label{eq:pdgam_recursive_feasibility}
\end{equation}
\end{prop}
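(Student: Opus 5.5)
The proof splits the event in \eqref{eq:pdgam_recursive_feasibility} into two parts. The first is admissibility of the sampled action. The second is nonemptiness of the next safe set. Both are then closed by induction on $t$.

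\textbf{Admissibility.} By construction in \eqref{eq:pdgam_policy}, $\tilde\pi_\theta^x(b\mid\mathbf s_t)=0$ for every $b\in\mathcal U_t^{\mathrm e}$. The weights $\omega^\star_{\theta,t}(\cdot\mid a)$ in \eqref{eq:pdgam_closed_form} are well defined probability distributions on $\mathcal U_t^{\mathrm s}$. This uses $\mathcal U_t^{\mathrm s}\neq\varnothing$ and positivity of the softmax, which keeps the denominator positive. Hence $\tilde\pi_\theta^x(\cdot\mid\mathbf s_t)$ is a distribution supported on $\mathcal U_t^{\mathrm s}(\Xi_{t-1})$, and $\Pr(a_t^x\in\mathcal U_t^{\mathrm s}\mid\mathbf s_t)=1$.

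This step needs $\mathcal U_t^{\mathrm s}(\Xi_{t-1})\neq\varnothing$. That condition is exactly the second event at the previous step. So the argument is an induction. The base case is that $\Xi_0$ is process feasible, which gives $\mathcal U_1^{\mathrm s}(\Xi_0)\neq\varnothing$. By Remark~1, restricting attention to the active frontier loses nothing.

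I also need the power setpoint $\mathbf P_t^\star$ to be consistent with (3). For a safe $a_t^x$, it is the projection of the continuous actor output onto $[\delta\underline P^s,\delta\overline P^s]$, so $\Xi_t=F(\Xi_{t-1},a_t^x,\mathbf P_t^\star)$ is a feasible partial schedule satisfying (2)--(7) up to time $t$.

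\textbf{Nonempty continuation.} The main work is showing $\mathcal U_{t+1}^{\mathrm s}(\Xi_t)\neq\varnothing$ for every feasible $\Xi_t$ reachable by a safe action. I would construct an explicit feasible continuation action per line $\ell$ as a ``default'' rule. Lines decouple, since constraints (3)--(7) are all per line, so the construction can be done one line at a time. The rule is:
\begin{itemize}
\item continue every in-progress task whose energy $E^{\ell,s}_{k,t}<\overline E^{\ell,s}_k$ and whose duration is still below $\overline\tau^{\ell,s}_k$;
\item finish tasks that have met the minimum duration when this is needed to respect the maximum duration;
\item start a downstream task ($r$ after $s$) whenever the elapsed transfer time reaches $\overline\tau^{s,r}_{\mathrm{tr}}$, and start it no earlier than $\underline\tau^{s,r}_{\mathrm{tr}}$.
\end{itemize}

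The danger is that a forced downstream start collides with the single-occupancy constraint (4). This would happen if heat $k+1$ needs the LF or CC while heat $k$ still occupies it. The configured inequalities rule this out:
\begin{itemize}
\item $\overline\tau_k^{\mathrm{LF}}\le\underline\tau_{k+1}^{\mathrm{EAF}}$ guarantees that heat $k$ has left the LF before heat $k+1$ can arrive from the EAF.
\item $\overline\tau_k^{\mathrm{LF}}+\overline\tau_k^{\mathrm{CC}}\le\underline\tau_{k+1}^{\mathrm{EAF}}+\overline\tau_{k+1}^{\mathrm{LF}}$ guarantees that the CC is free by the latest time heat $k+1$ must enter it.
\end{itemize}
In both cases I would account for the transfer-time offsets by comparing the start times of consecutive heats. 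The comparison starts from the EAF sequencing $t^{\mathrm{EAF}}_{k+1,\mathrm{st}}>t^{\mathrm{EAF}}_{k,\mathrm{fn}}$.

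For each heat, I would write the worst-case release and deadline intervals on the LF and CC. I would then verify the interval inclusions by chaining (5)--(6), and conclude that the default action satisfies (4)--(7) at $t+1$.

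\textbf{Main obstacle.} I expect this interval bookkeeping to be the hardest part. In particular, the continuation must remain feasible not just one step ahead but indefinitely. To handle this, I would strengthen the induction hypothesis to the statement ``$\Xi_t$ admits a feasible completion schedule.'' I would show this invariant is preserved by any safe action. The key point is that the safe set (4)--(7) is defined via satisfiability of the remaining timing windows, so the default schedule exists.

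If the paper's $\mathcal U^{\mathrm s}$ only checks one-step constraints, I would instead prove directly that the above default action is always one-step safe, using the two inequalities. This is what the stated hypotheses appear designed for.
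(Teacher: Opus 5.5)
Your architecture matches the paper's appendix. The support of $\tilde\pi_\theta^x$ on $\mathcal U_t^{\mathrm s}$ gives the first event. A backup continuation with latest downstream entries, anchored on the EAF bound $(t^{\ell,\mathrm{EAF}}_{k+1,\mathrm{fn}}-t^{\ell,\mathrm{EAF}}_{k,\mathrm{fn}})\Delta t\ge\underline\tau^{\ell,\mathrm{EAF}}_{k+1}$, gives the second. However, your backup is not the one the hypotheses support. For an uncommitted heat $k+1$, the paper fixes both waits at their maxima \emph{and} sets $d^{\ell,\mathrm{LF}}_{k+1}=\overline\tau^{\ell,\mathrm{LF}}_{k+1}$, and then telescopes. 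The LF gap equals the EAF finish gap plus $q^{\ell,\mathrm{EAF},\mathrm{LF}}_{k+1}-q^{\ell,\mathrm{EAF},\mathrm{LF}}_{k}-d^{\ell,\mathrm{LF}}_k$, hence is at least $\underline\tau^{\ell,\mathrm{EAF}}_{k+1}-\overline\tau^{\ell,\mathrm{LF}}_k$. The CC gap is at least $\underline\tau^{\ell,\mathrm{EAF}}_{k+1}+d^{\ell,\mathrm{LF}}_{k+1}-\overline\tau^{\ell,\mathrm{LF}}_k-\overline\tau^{\ell,\mathrm{CC}}_k$. Your rule stops heat $k+1$'s LF once its energy is delivered, so $d^{\ell,\mathrm{LF}}_{k+1}<\overline\tau^{\ell,\mathrm{LF}}_{k+1}$ is possible, and then the second inequality no longer makes the CC gap nonnegative. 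You must hold that LF for its maximal duration. Similarly, the first inequality does not clear the LF ``before heat $k+1$ can arrive.'' It needs $q^{\ell,\mathrm{EAF},\mathrm{LF}}_{k+1}\ge q^{\ell,\mathrm{EAF},\mathrm{LF}}_k$, i.e.\ the latest arrival.

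The more serious gap is the induction step. Both bounds need heat $k+1$'s EAF$\to$LF wait and LF time to be still free. But $a_t^x$ is an arbitrary element of $\mathcal U_t^{\mathrm s}$, and earlier safe actions may already have fixed short values. If $\mathcal U_t^{\mathrm s}$ only checks (4)--(7) on the realized schedule, the statement is then false under its own hypotheses. Take EAF $\ge 40$, LF $20$--$30$, CC $30$ and waits $5$--$10$ min; both inequalities hold. Let heat $k$ finish melting at minute $0$, wait $10$, refine over $[10,40)$, wait $10$ and cast over $[50,80)$. Let heat $k+1$ melt over $[0,40)$, wait $5$ and refine over $[45,65)$. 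Every step is admissible. At minute $70$ the only admissible action is to keep waiting, and at minute $75$ no admissible action exists.

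Hence your fallback, showing that the default action is one-step safe, cannot work. Your strengthened invariant is preserved only if $\mathcal U_t^{\mathrm s}$ has lookahead, i.e.\ it admits an action only when the resulting state has a feasible completion. You assert this rather than derive it. Under that reading the recursion is immediate: the completion witnessing $a_t^x\in\mathcal U_t^{\mathrm s}$ supplies an element of $\mathcal U_{t+1}^{\mathrm s}(\Xi_t)$, and the two inequalities matter at most for the base case. The paper's appendix has the same blind spot, since its backup is chosen only ``for each uncommitted heat $k+1$.'' Closing either argument requires the lookahead definition of $\mathcal U_t^{\mathrm s}$, or a stronger hypothesis under which the CC gap stays nonnegative for the shortest committed wait and LF time of heat $k+1$ against the longest ones of heat $k$.
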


Proposition~\ref{prop:pdgam_recursive_feasibility} establishes the recursive process feasibility of PDG-AP under the prescribed processing-time conditions. Specifically, starting from a process-feasible state, each safety-processed action leads to a successor state with at least one admissible continuation, thereby preventing process deadlock. The proof of Proposition~\ref{prop:pdgam_recursive_feasibility} is provided in Appendix~A.

Although the safety-processed policy guarantees process-feasible execution, the raw policy may retain probability mass on the excluded action set. To quantify its reliance on safety processing, the expected process-correction distance induced by PDG-AP is defined as:
\begin{equation}
\begin{aligned}
\mathcal C_t(\theta)
={}&
\sum_{a\in\mathcal U_t^{\mathrm e}}
\pi_\theta^x(a\mid\mathbf s_t)
\sum_{b\in\mathcal U_t^{\mathrm s}}
\omega_{\theta,t}^{\star}(b\mid a)
d_t(a,b),
\end{aligned}
\label{eq:expected_process_correction}
\end{equation}

\noindent where $\mathcal C_t(\theta)=0$ when
$\mathcal U_t^{\mathrm e}=\varnothing$. The probability assigned by the raw policy to excluded actions is:
\begin{equation}
p_t^{\mathrm e}(\theta)
=
\sum_{a\in\mathcal U_t^{\mathrm e}}
\pi_\theta^x(a\mid\mathbf s_t).
\label{eq:raw_excluded_probability}
\end{equation}

For $\mathcal U_t^{\mathrm e}\neq\varnothing$, define the minimum nonzero process-correction distance as:
\begin{equation}
d_t^{\min}
=
\min_{\substack{
a\in\mathcal U_t^{\mathrm e}, b\in\mathcal U_t^{\mathrm s}
}}
d_t(a,b)>0.
\label{eq:minimum_process_distance}
\end{equation}

\begin{prop}[Raw-policy process-infeasibility bound]
\label{prop:raw_policy_bound}
Let $D_{\mathrm{TV}}(\cdot,\cdot)$ denote the total variation distance. For any state satisfying $\mathcal U_t^{\mathrm s}\neq\varnothing$, the raw and
safety-processed discrete policies satisfy:
\begin{equation}
D_{\mathrm{TV}}
\left(
\pi_\theta^x(\cdot\mid\mathbf s_t),
\tilde\pi_\theta^x(\cdot\mid\mathbf s_t)
\right)=p_t^{\mathrm e}(\theta)\le\frac{\mathcal C_t(\theta)}{d_t^{\min}}
\le n_t^x\mathcal C_t(\theta).
\label{eq:raw_policy_bound}
\end{equation}
\end{prop}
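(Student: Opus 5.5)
The plan is to prove the three relations in \eqref{eq:raw_policy_bound} one at a time, working directly from the construction of the safety-processed policy in \eqref{eq:pdgam_policy}.

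\textbf{Equality.} I will write $D_{\mathrm{TV}}(p,q)=\tfrac12\sum_{u\in\mathcal U^x}|p(u)-q(u)|$ and split the sum over $\mathcal U_t^{\mathrm e}$ and $\mathcal U_t^{\mathrm s}$.

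On $\mathcal U_t^{\mathrm e}$, $\tilde\pi_\theta^x$ vanishes, so this part contributes $\sum_{a\in\mathcal U_t^{\mathrm e}}\pi_\theta^x(a\mid\mathbf s_t)=p_t^{\mathrm e}(\theta)$.

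On $\mathcal U_t^{\mathrm s}$, \eqref{eq:pdgam_policy} gives $\tilde\pi_\theta^x(b\mid\mathbf s_t)-\pi_\theta^x(b\mid\mathbf s_t)=\sum_{a}\pi_\theta^x(a\mid\mathbf s_t)\,\omega_{\theta,t}^\star(b\mid a)\ge 0$. Summing over $b$ and using $\sum_b\omega_{\theta,t}^\star(b\mid a)=1$, the total is again $p_t^{\mathrm e}(\theta)$.

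Halving the sum of the two parts yields $D_{\mathrm{TV}}=p_t^{\mathrm e}(\theta)$. If $\mathcal U_t^{\mathrm e}=\varnothing$, both sides are zero and $\mathcal C_t(\theta)=0$, so the whole chain holds trivially. Hence I may assume $\mathcal U_t^{\mathrm e}\neq\varnothing$, which makes $d_t^{\min}$ well defined.

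\textbf{First inequality.} In \eqref{eq:expected_process_correction}, every term with $a\in\mathcal U_t^{\mathrm e}$ and $b\in\mathcal U_t^{\mathrm s}$ satisfies $d_t(a,b)\ge d_t^{\min}$ by \eqref{eq:minimum_process_distance}. The weights are nonnegative and sum to one over $b$, so
\[
\mathcal C_t(\theta)\ge d_t^{\min}\sum_{a\in\mathcal U_t^{\mathrm e}}\pi_\theta^x(a\mid\mathbf s_t)\sum_{b}\omega_{\theta,t}^\star(b\mid a)=d_t^{\min}\,p_t^{\mathrm e}(\theta).
\]
Dividing by $d_t^{\min}>0$ gives the middle inequality.

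\textbf{Last inequality.} I expect this to be the only step needing care, since it requires a lower bound on $d_t^{\min}$. By definition, $d_t(a,b)=\|\boldsymbol\delta_t(a)-\boldsymbol\delta_t(b)\|_1/n_t^x$, and the numerator is a nonnegative integer because the entries are binary.

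The positivity $d_t(a,b)>0$ asserted in \eqref{eq:minimum_process_distance} rests on distinct actions inducing distinct operating vectors. Granting this, any $a\in\mathcal U_t^{\mathrm e}$ and $b\in\mathcal U_t^{\mathrm s}$ are distinct, since the two sets are disjoint, so their vectors differ in at least one coordinate. Hence $d_t(a,b)\ge 1/n_t^x$ and $d_t^{\min}\ge 1/n_t^x$, which gives $\mathcal C_t(\theta)/d_t^{\min}\le n_t^x\,\mathcal C_t(\theta)$.

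I will state explicitly that injectivity of $a\mapsto\boldsymbol\delta_t(a)$ is the structural fact being used. This is natural because $a_t^x$ is exactly the collection of frontier operating indicators.
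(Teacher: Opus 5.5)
Your proof is correct and follows the paper's route: the total variation equals $p_t^{\mathrm e}(\theta)$ because the excluded mass is moved entirely onto the safe set, $\mathcal C_t(\theta)\ge d_t^{\min}p_t^{\mathrm e}(\theta)$ holds because the weights sum to one, and $d_t^{\min}\ge 1/n_t^x$ holds because distinct binary operating vectors differ in at least one entry. Your explicit statement that $a\mapsto\boldsymbol\delta_t(a)$ is injective spells out what the paper's phrase ``distinct binary connection actions'' takes for granted.
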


\begin{IEEEproof}
For $\mathcal U_t^{\mathrm e}=\varnothing$, all terms in
\eqref{eq:raw_policy_bound} are zero. Otherwise,
\eqref{eq:pdgam_policy} assigns zero probability to excluded actions and transfers their total probability mass
$p_t^{\mathrm e}(\theta)$ to the safe set. Therefore, the
$\ell_1$ distance between the raw and processed policies is
$2p_t^{\mathrm e}(\theta)$. Moreover,
$\sum_{b\in\mathcal U_t^{\mathrm s}}
\omega_{\theta,t}^{\star}(b\mid a)=1$ and
$d_t(a,b)\ge d_t^{\min}$ yield
$\mathcal C_t(\theta)\ge
d_t^{\min}p_t^{\mathrm e}(\theta)$.
Since distinct binary connection actions satisfy
$\|\boldsymbol\delta_t(a)-\boldsymbol\delta_t(b)\|_1\ge1$,
$d_t^{\min}\ge1/n_t^x$, which proves
Proposition~\ref{prop:raw_policy_bound}.
\end{IEEEproof}

\begin{remark}
\emph{
As $\tau_m\rightarrow\infty$, \eqref{eq:pdgam_closed_form} satisfies
$\omega_{\theta,t}^{\star}(\cdot\mid a)
\rightarrow
\bar\pi_{\theta,t}^{x}(\cdot\mid\mathbf s_t)$,
and \eqref{eq:pdgam_policy} reduces exactly to the standard masked policy. For a fixed decision state,
$\mathcal U_t^{\mathrm{s}}$ and $d_t$ are independent of $\theta$.
Consequently, both $\tilde\pi_{\theta}^{x}$ and
$\mathcal C_t(\theta)$ remain differentiable with respect to the actor parameters. The former is used for discrete-action sampling and the PPO likelihood ratio, while the latter guides raw-policy internalization.}
\end{remark}

The continuous branch decides the actual power of activated adjustable devices. Let $z_{k,t}^{\ell,s}\in\mathbb R$ denote the unbounded latent action sampled from the conditional continuous policy $\pi_\theta^p(\cdot\mid\mathbf s_t,a_t^x)$. The corresponding device power is obtained through the bounded mapping:
\begin{equation}
P_{k,t}^{\ell,s}
=\delta_{k,t}^{\ell,s}(a_t^x)
\Bigg[\underline P^s
+\frac{\overline P^s-\underline P^s}{2}
\times\left(1+\tanh z_{k,t}^{\ell,s}\right)\Bigg].
\label{eq:bounded_actual_power}
\end{equation}

\begin{figure}[htbp] 
    \vspace{-0.1cm}
    \centerline{\includegraphics[width=1\columnwidth]{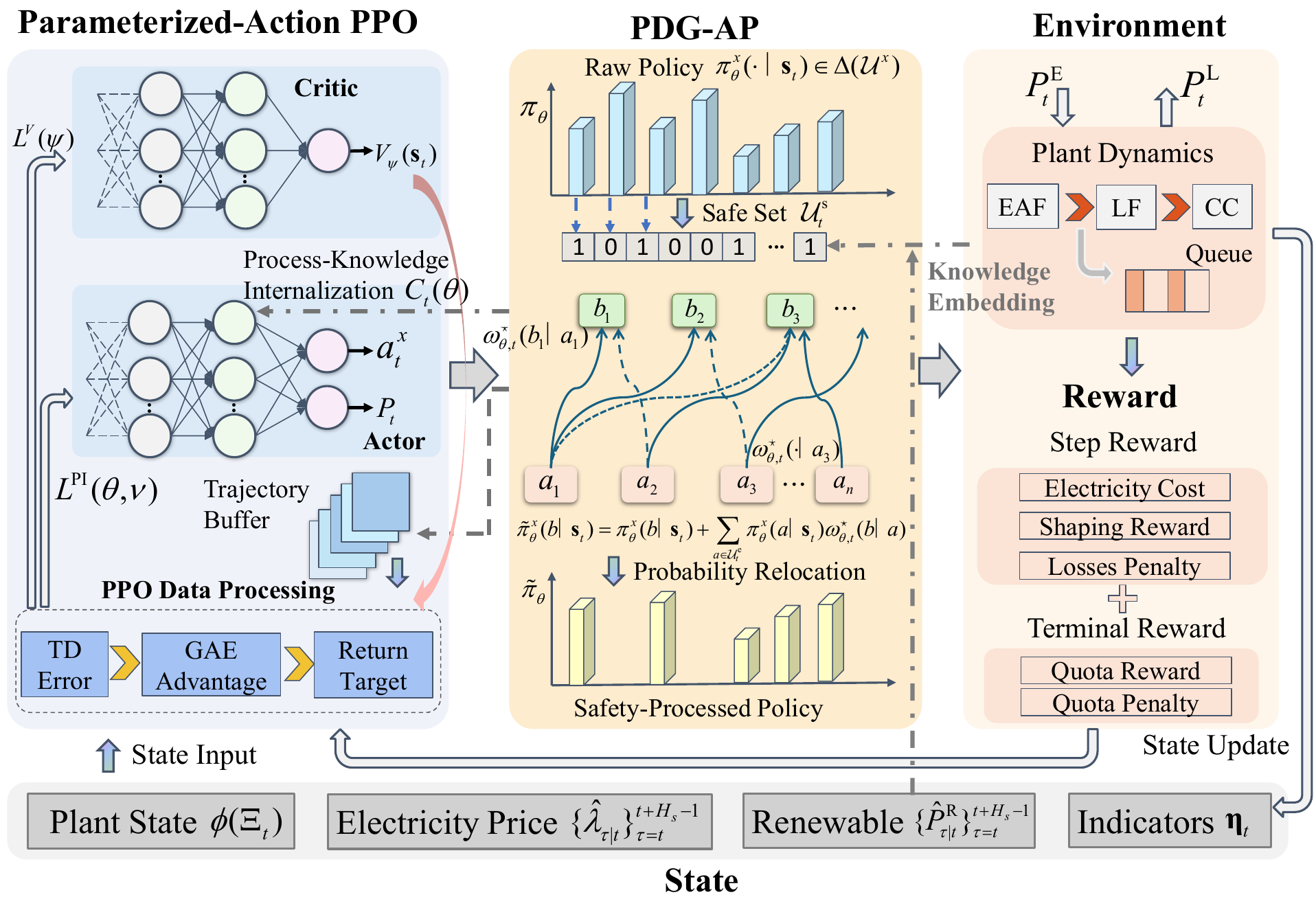}}
    \caption{PK-SDRL framework for real-time dispatch of SPLs.}
    \label{Flowchart}
    \vspace{-0.5cm}
\end{figure}

\subsection{Instantaneous Reward Function}

During intraday dispatch, local renewable generation is first used to supply the SPLs, while the remaining power demand is supplied by the external grid. Let $[x]^+=\max\{x,0\}$. The grid purchase power at time $t$ is given by:
\begin{equation}
P_t^{G}
=
\Bigg[
\sum_{\ell\in\mathcal L}
\sum_{s\in\mathcal S_{\ell}}
\sum_{k\in\mathcal K_{\ell}}
P_{k,t}^{\ell,s}
-
P_t^{\mathrm R}
\Bigg]^+,
\end{equation}
where $P_t^{\mathrm R}$ denotes the available local renewable generation. Accordingly, the renewable power utilized by the plant is
$P_t^{\mathrm{R,u}}
=
\min\{
P_t^{\mathrm R},
\sum_{\ell\in\mathcal L}
\sum_{s\in\mathcal S_{\ell}}
\sum_{k\in\mathcal K_{\ell}}
P_{k,t}^{\ell,s}
\}$.
The reward consists of an intraday reward for economic operation together with process advancement and a terminal reward for daily production completion. The intraday reward is defined as:
\begin{equation}
\begin{aligned}
r_t^{\mathrm{in}}
=
&-c_t^{\mathrm e}P_t^{G}\Delta t
-c^{\mathrm R}P_t^{\mathrm{R,u}}\Delta t
-c^{\mathrm{exc}}c_t^{\mathrm e}[P_t^{G}-D]^+\Delta t  \\
&-c^{\mathrm{hm}} n_t^{\mathrm{hm}}
-c^{\mathrm{sp}} n_t^{\mathrm{sp}}
+r_t^{\mathrm{shp}},
\end{aligned}
\label{eq:intraday_reward}
\end{equation}
where $c_t^{\mathrm e}$ is the real-time grid electricity price, $c^{\mathrm R}$ is the local renewable price, $D$ is the contract-demand threshold, and $c^{\mathrm{exc}}$ is the demand-exceedance cost coefficient. $n_t^{\mathrm{hm}}$ and $n_t^{\mathrm{sp}}$ denote the numbers of heats violating the admissible EAF$\rightarrow$LF and LF$\rightarrow$CC interstage intervals during time $t$, with penalty coefficients $c^{\mathrm{hm}}$ and $c^{\mathrm{sp}}$, respectively. The shaping reward $r_t^{\mathrm{shp}}$ provides intermediate feedback on multi-stage production progress.

Since an upstream action affects final heat completion only after the subsequent processing stages, sparse terminal feedback may result in delayed credit assignment. A state-transition-based shaping term is therefore introduced. The potential function is defined as:
\begin{equation}
\Phi(\Xi_t)
=
\sum_{\ell\in\mathcal L}
\sum_{s\in\mathcal S_{\ell}}
\omega_s
\sum_{k\in\mathcal K_\ell}
\frac{E_{k,t}^{\ell,s}}
{\overline E_{k}^{\ell,s}},
\end{equation}
where $\omega_s\ge 0$ is the weight associated with processing stage $s$. The shaping reward with shaping coefficient $\eta\ge 0$ and discount factor $\gamma$ is then given by:
\begin{equation}
r_t^{\mathrm{shp}}=\eta\big(\gamma\Phi(\Xi_{t+1})-\Phi(\Xi_t)\big).
\end{equation}

At the end of each operating day, a terminal reward is applied to evaluate the daily quota fulfillment rate. Let $N_{|\mathcal T|}^{\mathrm{fin}}$, $N^{\mathrm{qta}}$, $\rho^{\mathrm{day}}$ denote the number of completed heats, daily quota and the reward for quota fulfillment, respectively. The terminal reward is defined as:
\begin{equation}
r_{|\mathcal T|}^{\mathrm{ter}}
=\rho^{\mathrm{day}}
\mathbb I\!\left\{N_{|\mathcal T|}^{\mathrm{fin}}\ge N^{\mathrm{qta}}\right\}
-c^{\mathrm{q}}
\big[N^{\mathrm{qta}}-N_{|\mathcal T|}^{\mathrm{fin}}\big]^+,
\label{eq:terminal_reward}
\end{equation}

\noindent where $c^{\mathrm{q}}$ is the unit production-shortfall penalty. The total discounted episodic reward used for training is therefore derived as:
\begin{equation}
R
=
\sum_{t=0}^{{|\mathcal T|}-1}\gamma^t r_t^{\mathrm{in}}
+
\gamma^{|\mathcal T|} r_{|\mathcal T|}^{\mathrm{ter}}.
\label{eq:total_episode_reward}
\end{equation}

\subsection{Parameterized-Action PPO with Process-Knowledge Internalization}
\label{subsec:parameterized_action_ppo}

Since the action comprises a discrete connection decision $a_t^x$ and continuous power parameters, a parameterized-action actor--critic framework is adopted. The actor network contains two output branches. The discrete branch generates the raw probabilities of the candidate connection actions, which are transformed into the safety-processed policy $\tilde{\pi}_{\theta}^{x}(a_t^x\mid\mathbf s_t)$ through the process-knowledge embedding described above. After $a_t^x$ is sampled from this policy, the continuous branch generates a conditional policy for the latent power parameters:
\begin{equation}
\pi_{\theta}^{p}
\left(
\mathbf z_t^{p}
\mid \mathbf s_t,a_t^x
\right)
=
\mathcal N
\left(
\boldsymbol{\mu}_{\theta}^{p}(\mathbf s_t,a_t^x),
\boldsymbol{\Sigma}_{\theta}^{p}
\right),
\end{equation}
where $\mathbf z_t^{p}$ contains the latent power parameters associated with the operations activated by $a_t^x$, $\boldsymbol{\mu}_{\theta}^{p}(\cdot)$ denotes the corresponding mean output, and
$\boldsymbol{\Sigma}_{\theta}^{p}
=\operatorname{diag}((\boldsymbol{\sigma}_{\theta}^{p})^2)$
is the diagonal exploration covariance. The sampled latent parameters are converted into executable device powers through the deterministic bounded-power mapping defined above:
\begin{equation}
\mathbf P_t^{\star}
=
\mathcal G_t^{p}
\left(
\mathbf z_t^{p};
\mathbf s_t,a_t^x
\right).
\end{equation}

Accordingly, letting
$\mathbf z_t=(a_t^x,\mathbf z_t^{p})$
denote the parameterized action sampled by the policy, its joint distribution is written as:
\begin{equation}
\tilde{\pi}_{\theta}
\left(
\mathbf z_t\mid\mathbf s_t
\right)
=
\tilde{\pi}_{\theta}^{x}
\left(
a_t^x\mid\mathbf s_t
\right)
\pi_{\theta}^{p}
\left(
\mathbf z_t^{p}
\mid\mathbf s_t,a_t^x
\right).
\end{equation}

The joint likelihood of $\mathbf z_t$ is used for PPO updating, whereas the corresponding processed action
$a_t^x, \mathbf P_t^{\star}$
is executed in the production environment. Let $V_{\psi}(\mathbf s_t)$ denote the critic network parameterized by $\psi$, and let $\psi_{\mathrm{old}}$ denote its parameters when the trajectory is collected. Let
$r_t \triangleq r_t^{\mathrm{in}}
+\mathbb{I}_{\{t=|\mathcal T|\}}r_{|\mathcal T|}^{\mathrm{ter}}$
denote the one-step reward used for policy optimization. Given the reward $r_t$, the one-step temporal-difference error is:
\begin{equation}
\delta_t
=
r_t
+
\gamma
V_{\psi_{\mathrm{old}}}(\mathbf s_{t+1})
-
V_{\psi_{\mathrm{old}}}(\mathbf s_t),
\end{equation}

The terminal-state value is set to zero, and the  generalized advantage estimation (GAE) is computed as:
\begin{equation}
\hat A_t
=
\sum_{j=0}^{J_t-1}
(\gamma\lambda)^j\delta_{t+j},
\end{equation}
where $\lambda\in[0,1]$ is the GAE trace-decay parameter controlling the bias--variance tradeoff, and $J_t$ denotes the number of valid steps from $t$ to the end of the collected rollout or episode. The return target for critic learning is given by:
\begin{equation}
\hat R_t
=
\hat A_t
+
V_{\psi_{\mathrm{old}}}(\mathbf s_t).
\end{equation}

Both $\hat A_t$ and $\hat R_t$ are held fixed during each PPO update. A clipped policy update is employed to improve training stability. Let $\theta_{\mathrm{old}}$ denote the actor parameters used to collect the trajectory. The probability ratio is computed from the joint likelihood of the sampled safe connection action and latent power parameters:
\begin{equation}
\rho_t(\theta)
=
\frac{
\tilde{\pi}_{\theta}^{x}(a_t^x\mid\mathbf s_t)
\pi_{\theta}^{p}
\left(
\mathbf z_t^{p}
\mid\mathbf s_t,a_t^x
\right)
}{
\tilde{\pi}_{\theta_{\mathrm{old}}}^{x}(a_t^x\mid\mathbf s_t)
\pi_{\theta_{\mathrm{old}}}^{p}
\left(
\mathbf z_t^{p}
\mid\mathbf s_t,a_t^x
\right)
}.
\end{equation}

The clipped actor objective with clipping parameter $\epsilon$ is written as:
\begin{equation}
L^{\mathrm{clip}}(\theta)
=
\mathbb E_t
\left[
\min
\left(
\rho_t(\theta)\hat A_t,\,
\mathrm{clip}
\left(
\rho_t(\theta),1-\epsilon,1+\epsilon
\right)
\hat A_t
\right)
\right].
\end{equation}

The actor update is further constrained by the expected process-correction distance:
\begin{equation}
\max_{\theta}\quad
L^{\mathrm{clip}}(\theta), \quad
\mathrm{s.t.}\quad
\mathbb E_t\left[\mathcal C_t(\theta)\right]
\le\kappa,
\label{eq:correction_budgeted_ppo}
\end{equation}

\noindent where $\kappa>0$ denotes the prescribed process-correction budget. The corresponding primal--dual actor objective is:
\begin{equation}
L^{\mathrm{PI}}(\theta,\nu)
={}
L^{\mathrm{clip}}(\theta)-
\nu
\left(
\mathbb E_t[\mathcal C_t(\theta)]
-\kappa
\right),
\quad \nu\ge0,
\label{eq:internalized_actor_objective}
\end{equation}
where $\nu$ is the dual variable. The actor parameters are updated by maximizing
$L^{\mathrm{PI}}(\theta,\nu)$, while $\nu$ is updated according to:
\begin{equation}
\nu
\leftarrow
\left[
\nu
+
\eta_{\nu}
\left(
\mathbb E_t[\mathcal C_t(\theta)]
-\kappa
\right)
\right]_{+},
\label{eq:dual_update}
\end{equation}

\noindent where $\eta_{\nu}>0$ is the dual learning rate. Let
$\overline n^x=\max_{t\in\mathcal T}n_t^x$.
From Proposition~\ref{prop:raw_policy_bound}, any policy satisfying
\eqref{eq:correction_budgeted_ppo} obeys:
\begin{equation}
\mathbb E_t
\left[
p_t^{\mathrm e}(\theta)
\right]
\le
\overline n^x
\mathbb E_t
\left[
\mathcal C_t(\theta)
\right]
\le
\overline n^x\kappa.
\label{eq:expected_infeasible_probability_bound}
\end{equation}

Hence, PDG-AP guarantees process-feasible execution, while the correction-budgeted update limits the raw actor's dependence on safety processing. PDG-AP remains active during deployment to preserve the hard feasibility guarantee. The critic is updated by minimizing the value loss below, and the complete training procedure of the proposed PK-SDRL method is summarized in Algorithm~\ref{alg:pksdrl}.
\begin{equation}
L^{V}(\psi)
=\mathbb E_t
\left[
\left(
V_{\psi}(\mathbf s_t)-\hat R_t
\right)^2
\right].
\end{equation}

\begin{algorithm}[t]
\caption{Training Procedure for PK-SDRL}
\label{alg:pksdrl}
\SetAlgoLined
\SetEndCharOfAlgoLine{}

\KwIn{Training-day data $\mathcal D$, process model, and hyperparameters.}
\KwOut{Trained policy $\tilde{\pi}_{\theta^\ast}$.}

Initialize $\theta$, $\psi$, and $\nu\leftarrow0$.\;

\For{each episode drawn from $\mathcal D$}{
    Reset $\Xi_0$ and $\mathcal B\leftarrow\varnothing$, and set
    $(\theta_{\mathrm{old}},\psi_{\mathrm{old}})
    \leftarrow(\theta,\psi)$.\;

    \For{$t\in\mathcal T$}{
        Observe $\mathbf s_t$ and construct
        $\{\mathcal F_t^{\ell,s}\}$,
        $\mathcal U_t^{\mathrm s}$, and $\mathcal U_t^{\mathrm e}$.\;

        Obtain $\tilde{\pi}_{\theta_{\mathrm{old}}}^{x}$
        through PDG-AP using
        \eqref{eq:pdgam_closed_form}--\eqref{eq:pdgam_policy}.\;

        Sample
        $\mathbf z_t=(a_t^x,\mathbf z_t^p)
        \sim\tilde{\pi}_{\theta_{\mathrm{old}}}
        (\cdot\mid\mathbf s_t)$.\;

        Map $\mathbf z_t^p$ to $\mathbf P_t^\star$, execute the action,
        and observe $r_t$ and $\mathbf s_{t+1}$.\;

        Store $(\mathbf s_t,\mathbf z_t,r_t,\mathbf s_{t+1})$
        in $\mathcal B$.\;
    }

    Compute $\hat A_t$ and $\hat R_t$ from $\mathcal B$ using GAE.\;

    Update $\theta$ by maximizing $L^{\mathrm{PI}}$, update $\psi$
    by minimizing $L^V$, and update $\nu$ using
    \eqref{eq:dual_update}.\; 
}
\end{algorithm}

\section{Case Study}\label{Case Study}

\subsection{Set up}

\subsubsection{Renewable Energy Sources and Market Data}
Real-time wind and photovoltaic generation data were collected from a renewable energy station in an industrial microgrid, where the installed wind and PV capacities are 425 MW and 375 MW, respectively. The dataset spans 14 months with a one-second resolution and covers a broad range of typical renewable generation scenarios from Oct/2024-Nov/2025, as shown in Fig.~\ref{RES}.

\begin{figure}[htbp] 
    \vspace{-0.1cm}
    \centerline{\includegraphics[width=1\columnwidth]{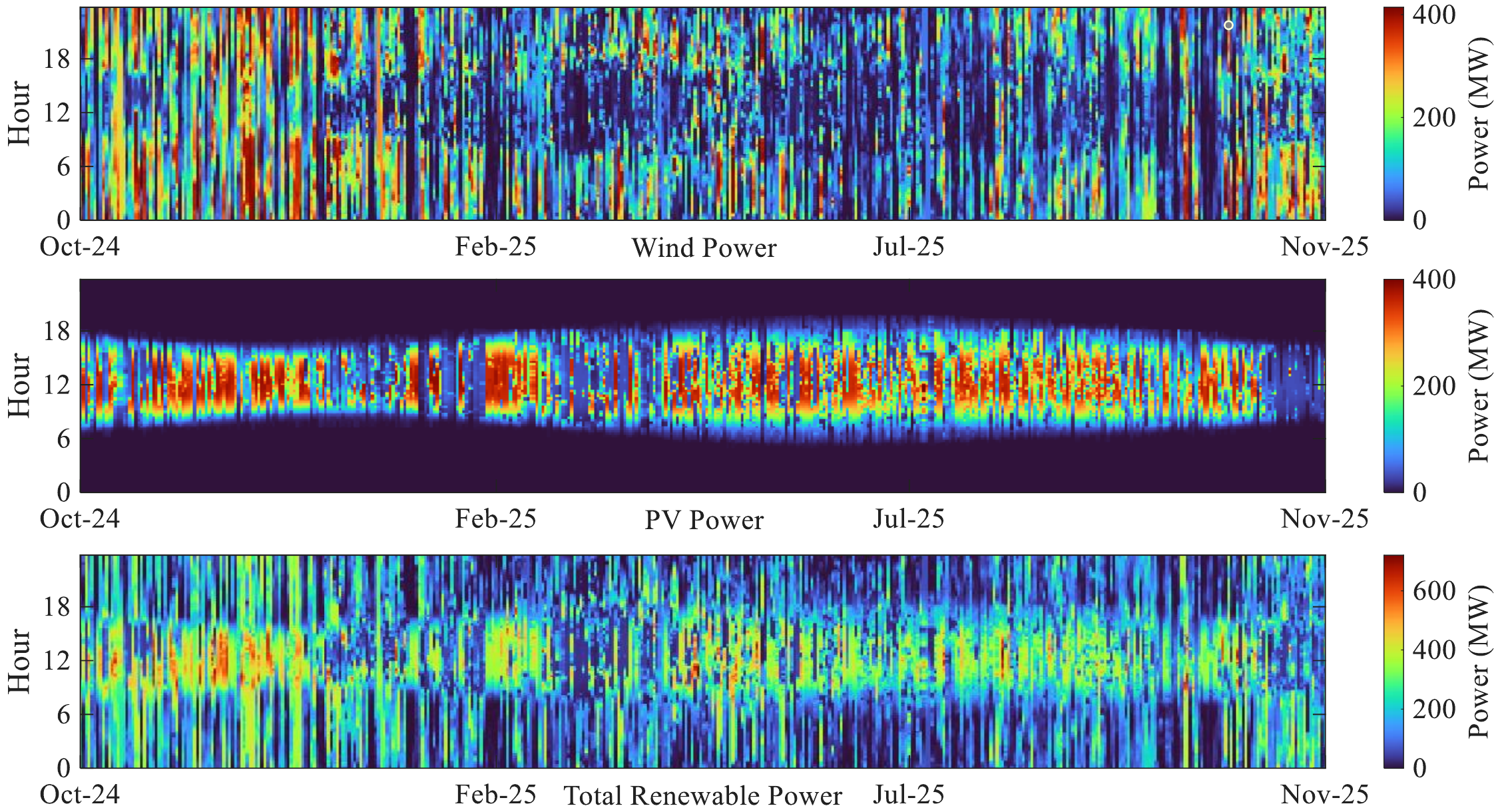}}
    \caption{Distribution and variation trends of wind and photovoltaic power output over a 14-month period.}
    \label{RES}
    \vspace{-0.1cm}
\end{figure}

Real-time system electricity price data were collected from PJM at a 5-minute resolution and were used for RL training and testing. The daily price profiles are illustrated in Fig.~\ref{Price}.

\begin{figure}[htbp] 
    \vspace{-0.1cm}
    \centerline{\includegraphics[width=0.9\columnwidth]{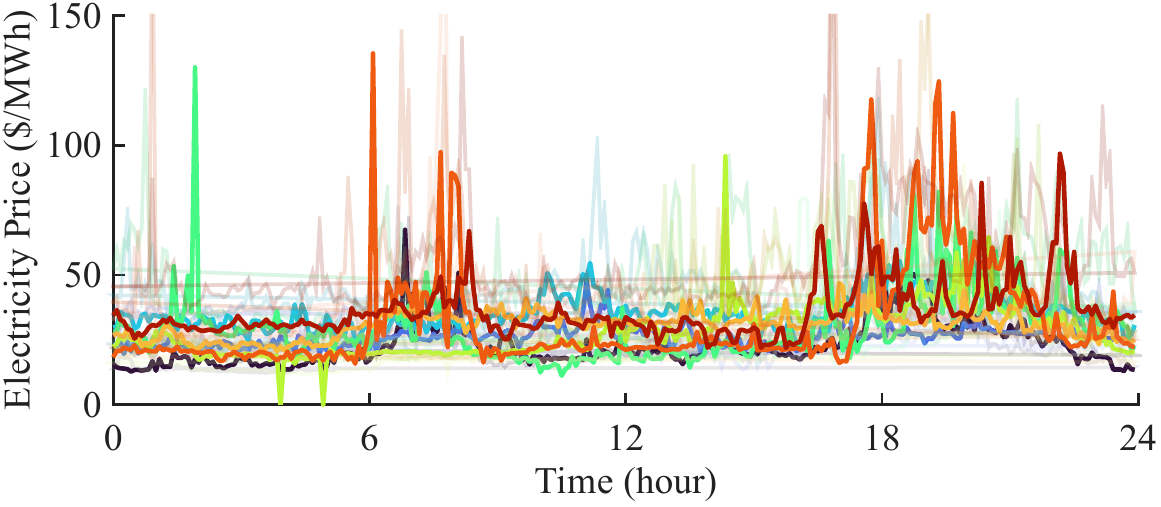}}
    \caption{Typical 5-minute real-time system electricity price profiles for PJM from Oct/2024 to Nov/2025.}
    \label{Price}
    \vspace{-0.5cm}
\end{figure}

\subsubsection{System Configuration and Parameters}

A test system is constructed based on a representative large-scale short-process steel plant. The system includes one crushing unit and three steelmaking lines,
each of which consists of one EAF, one LF, and one CC. The main equipment parameters are summarized in Table~\ref{tab:line_parameters}, and the rated power of each unit is specified based on an 80-ton heat size per batch. Both unit commitment and power dispatch are implemented at a 5-minute resolution, aligned with the electricity price data.

\subsubsection{Simulation Environment}

RL training and testing are implemented in MATLAB (R2025b) on a workstation equipped with an Intel Core Ultra 9 275HX processor at 2.70 GHz and 32 GB RAM. Gurobi, interfaced through YALMIP, is used to solve the optimization problems. During RL training, the agent is trained for 200 epochs by repeatedly cycling through 400 days of historical data, while the validation set consists of an additional 15 days of unseen renewable generation and electricity price data. The state variables include electricity-price and renewable-generation forecasts over a 3-h look-ahead horizon. This horizon is consistent with the practical range of ultra-short-term forecasting at a 5-min resolution~\cite{forecast}. In testing, forecast data are generated by adding zero-mean Gaussian
errors to the realized electricity-price and renewable-generation trajectories,
with the error standard deviation $\sigma_{\mathrm f}$ set to 10\% of the realized values.

\vspace{-0.5cm}
\begin{table}[htbp]
\caption{Key Device and Parameters of the Steel Plant}
\label{tab:line_parameters}
\centering
\footnotesize
\setlength{\tabcolsep}{2pt}
\renewcommand{\arraystretch}{1.08}
\resizebox{\columnwidth}{!}{%
\begin{tabular}{@{}lcccccc@{}}
\toprule
Parameter
& Crusher
& EAF
& LF 
& CC
& EAF$\rightarrow$LF
& LF$\rightarrow$CC \\
\midrule
$(\underline{P},\,\overline{P})$(MW) 
& (4,\,4)
& (45,\,75)
& (6,\,10)
& (4,\,4)
& \textbackslash 
& \textbackslash \\
Energy (MWh/heat) 
& \textbackslash 
& 34.8 
& 2.4 
& 2 
& \textbackslash 
& \textbackslash \\
Time (min) 
& \textbackslash 
& 40-50
& 20-30
& 30
& 5--10 
& 5--10 \\
Waiting time (min) 
& \textbackslash 
& $\ge 5 $
& $\ge 5 $
& $\ge 5 $
& $\leq 10$ 
& $\leq 10$ \\
\bottomrule
\end{tabular}%
}
\vspace{-0.5cm}
\end{table}

\subsection{Scheduling Results}\label{Scheduling Results}

\subsubsection{Training Convergence and Feasibility}\label{Training Convergence and Feasibility}

During training, the PK-SDRL agent is evaluated periodically to track its transition from stochastic exploration to deterministic operation. The policy is first trained with sampled actions to encourage exploration, while greedy validation is conducted at regular checkpoints to assess the performance of the deployable policy on unseen validation days. The key parameters are summarized in Table~\ref{tab:pksdrl_parameters}. Fig.~\ref{reward} presents the training convergence of the proposed PK-SDRL scheduling policy evaluated at greedy validation checkpoints. In the early training stage, the validation reward remains low and the quota fulfillment rate fluctuates noticeably, reflecting the agent's insufficient ability to meet the daily production target and the resulting terminal shortfall penalties. As training progresses, the validation reward increases rapidly and the quota fulfillment rate reaches one. This trend indicates that PK-SDRL first learns the primary feasibility requirement of completing 54 heats per day and then converges to a stable scheduling regime.

\vspace{-0.0cm}
\begin{table}[htbp]
\caption{Key Parameters Used in PK-SDRL Training}
\label{tab:pksdrl_parameters}
\centering
\footnotesize
\setlength{\tabcolsep}{1pt}
\renewcommand{\arraystretch}{1.08}
\resizebox{\columnwidth}{!}{%
\begin{tabular}{@{}cc@{\hspace{0.5em}}cc@{\hspace{0.5em}}cc@{}}
\toprule
Symbol & Value
& Symbol & Value
& Symbol & Value \\
\midrule

$N^{\mathrm{qta}}$
& $54$ heats/day
& $D$
& $100$ MW
& $H_s$
& 36
\\

$\sigma_{\mathrm f}$
& $10\%$
& $c^{\mathrm R}$
& $10$ \$/MWh
& $c^{\mathrm{exc}}$
& $2$
\\

$c^{\mathrm q}$
& $2.5\times10^{4}$ \$/heat
& $c^{\mathrm{hm}}$
& $1.4\times10^{4}$ \$/heat
& $c^{\mathrm{sp}}$
& $9.0\times10^{3}$ \$/heat
\\

$\rho^{\mathrm{day}}$
& $2.5\times10^{4}$ \$
& $\eta$
& $3.0\times10^{3}$ \$
& $\{\omega_s\}_{s\in\mathcal S}$
& $1/3$
\\

$\tau_m$
& $0.10$
& $\kappa$
& $0.05$
& $\eta_{\nu}$
& $1.0\times10^{-3}$
\\

\bottomrule
\end{tabular}%
}
\vspace{-0.2cm}
\end{table}

\begin{figure}[htbp] 
    \vspace{-0.1cm}
    \centerline{\includegraphics[width=0.95\columnwidth]{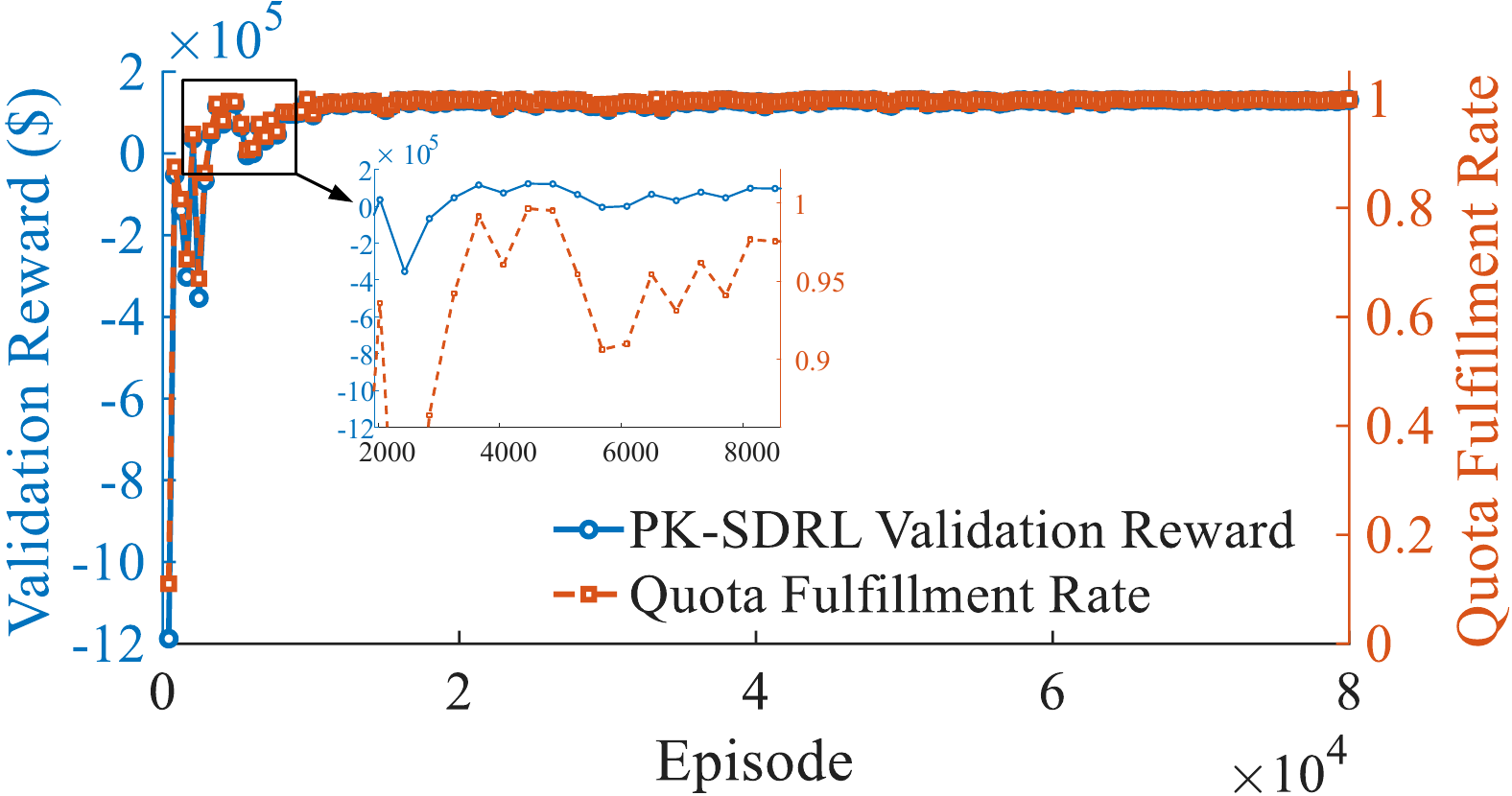}}
    \caption{Validation reward and quota fulfillment rate during the training of the proposed PK-SDRL scheduling policy.}
    \label{reward}
    \vspace{-0.1cm}
\end{figure}

\begin{figure}[htbp] 
    \vspace{-0.1cm}
    \centerline{\includegraphics[width=0.8\columnwidth]{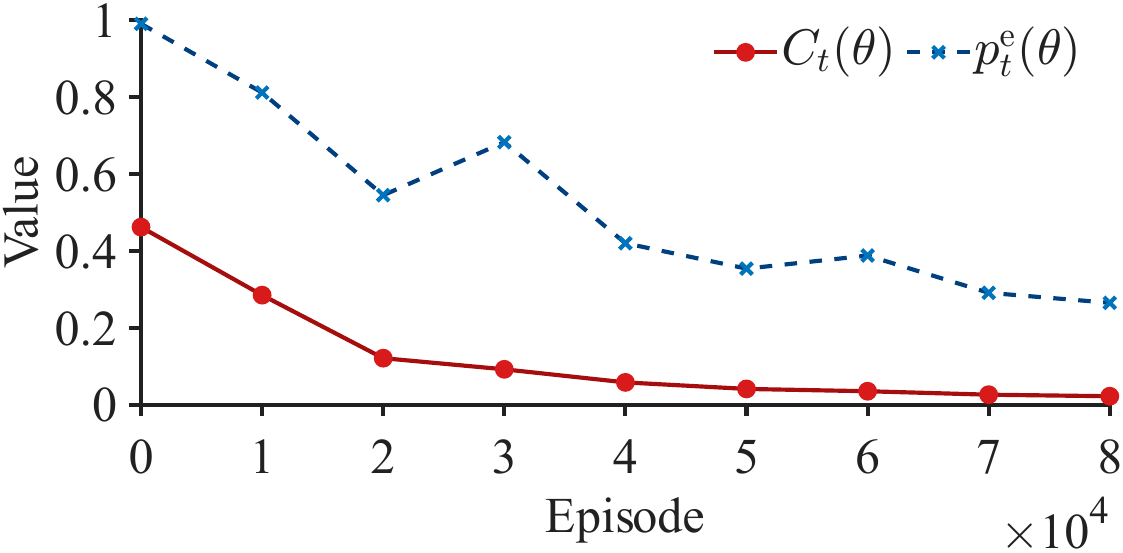}}
    \caption{Evolution of the raw excluded-action probability and expected process-correction distance during PK-SDRL training.}
    \label{PDG_AM}
    \vspace{-0.5cm}
\end{figure}

Fig.~\ref{PDG_AM} verifies the effectiveness of process-knowledge
internalization during policy training. As training proceeds, both the raw
excluded-action probability $p_t^{\mathrm e}(\theta)$ and the expected
process-correction distance $\mathcal C_t(\theta)$ decrease substantially,
indicating that the raw actor progressively shifts its probability mass toward
the safe action support. The faster reduction in $\mathcal C_t(\theta)$ further shows that the remaining excluded decisions require only minor process corrections, thereby reducing the policy's dependence on PDG-AP. Fig.~\ref{cost} further compares the selected PK-SDRL agent with a rule-based production policy on unseen validation days. The PK-SDRL policy achieves lower daily electricity cost across the validation set by adaptively coordinating production decisions with real-time electricity prices and renewable generation profiles. Meanwhile, both hot-metal losses and semi-product losses remain zero, demonstrating that the economic improvement is obtained while respecting the inter-stage transfer and waiting-time constraints.

\begin{figure}[htbp] 
    \vspace{-0.1cm}
    \centerline{\includegraphics[width=1\columnwidth]{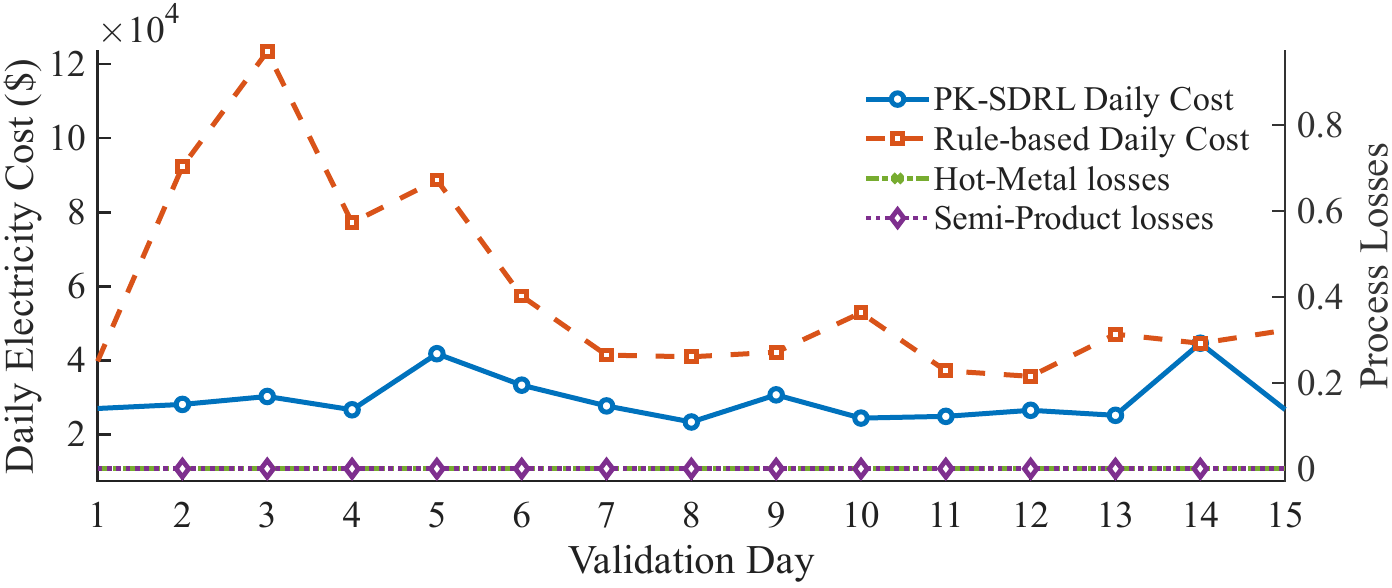}}
    \caption{Daily electricity cost comparison and process-loss performance on unseen validation days.}
    \label{cost}
    \vspace{-0.5cm}
\end{figure}

\subsubsection{Real-Time Decision Analysis of the PK-SDRL Framework}\label{Intra-Day Decision Analysis of the PK-SDRL Policy}

At the system level, PK-SDRL does not correspond to a deterministic heuristic that increases production at low-price periods and suspends production at high-price periods. Rather, it learns a dynamic decision-making policy that jointly accounts for the daily production requirement, the forecast-horizon grid electricity cost, and local renewable generation. For the representative validation day in Fig.~\ref{Fig_System_Level_Day_9}, renewable generation reaches nearly 400 MW, the electricity price peaks at approximately 70--80 \$/MWh, and the total plant power reaches about 220 MW. However, several high-load intervals coincide with high renewable generation, and therefore the grid import does not increase proportionally with the total plant power. This indicates that PK-SDRL exploits intervals with low grid-import pressure instead of merely suppressing load. When renewable generation is limited and the electricity price is unfavorable, the policy reduces production intensity. When renewable generation is sufficient or the price remains within an acceptable range, the policy increases production power to advance heat completion. The cumulative production reaches 54 heats before the end of the day, indicating that PK-SDRL leverages intra-day variations in price and renewable generation to reduce the effective grid electricity cost while satisfying the production requirement.

\begin{figure}[htbp] 
    \vspace{-0.1cm}
    \centerline{\includegraphics[width=1\columnwidth]{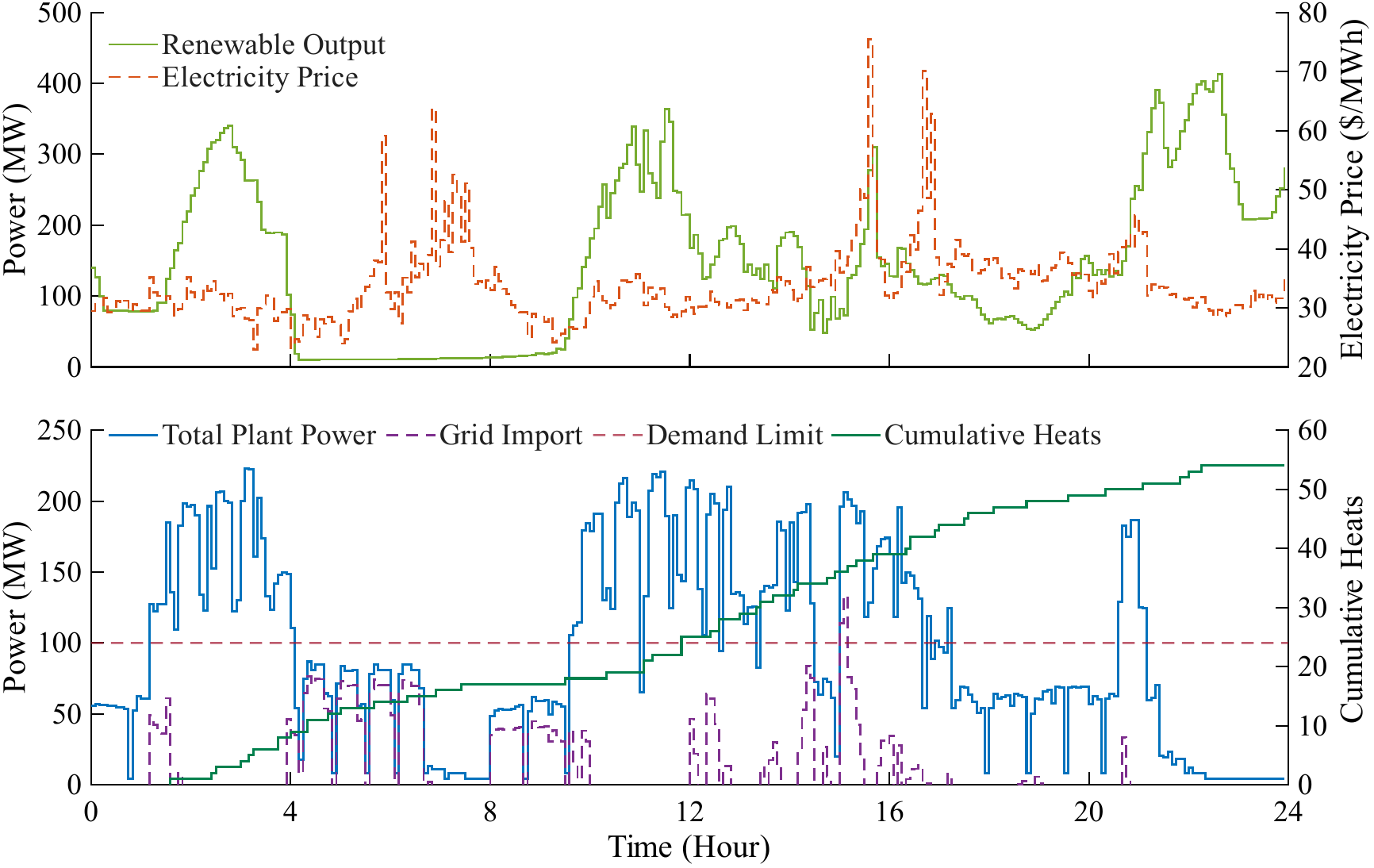}}
    \caption{Plant-level intra-day scheduling trajectory of the proposed PK-SDRL policy on a representative validation day.}
    \label{Fig_System_Level_Day_9}
    \vspace{-0.1cm}
\end{figure}

At the device level, PK-SDRL translates the heterogeneous flexibility of different process stages into an executable staggered production schedule. As shown in Fig.~\ref{Fig_Device_Level_Day_9}, EAFs serve as the primary high-power controllable loads, with operating powers varying around the 60 MW baseline, and therefore provide the main source of economic dispatch. By contrast, LFs and CCs operate at much lower power levels, approximately 8 MW and 4 MW, respectively, and their dispatching primarily support the continuity of upstream material transfer. The three production lines operate in a staggered manner instead of being started synchronously, which mitigates power peaks caused by simultaneous EAF operation and provides coordination space for subsequent LF and CC stages. The sequential EAF--LF--CC processing relationship is also maintained throughout the day, with no material-transfer infeasibility induced by price-driven scheduling. These results demonstrate that PK-SDRL does not simply learn a load-reduction policy. Under safety-aware action constraints, it establishes a hierarchical coordination pattern in which high-power stages provide economic flexibility, while downstream stages preserve process continuity.

\begin{figure}[htbp] 
    \vspace{-0.1cm}
    \centerline{\includegraphics[width=1\columnwidth]{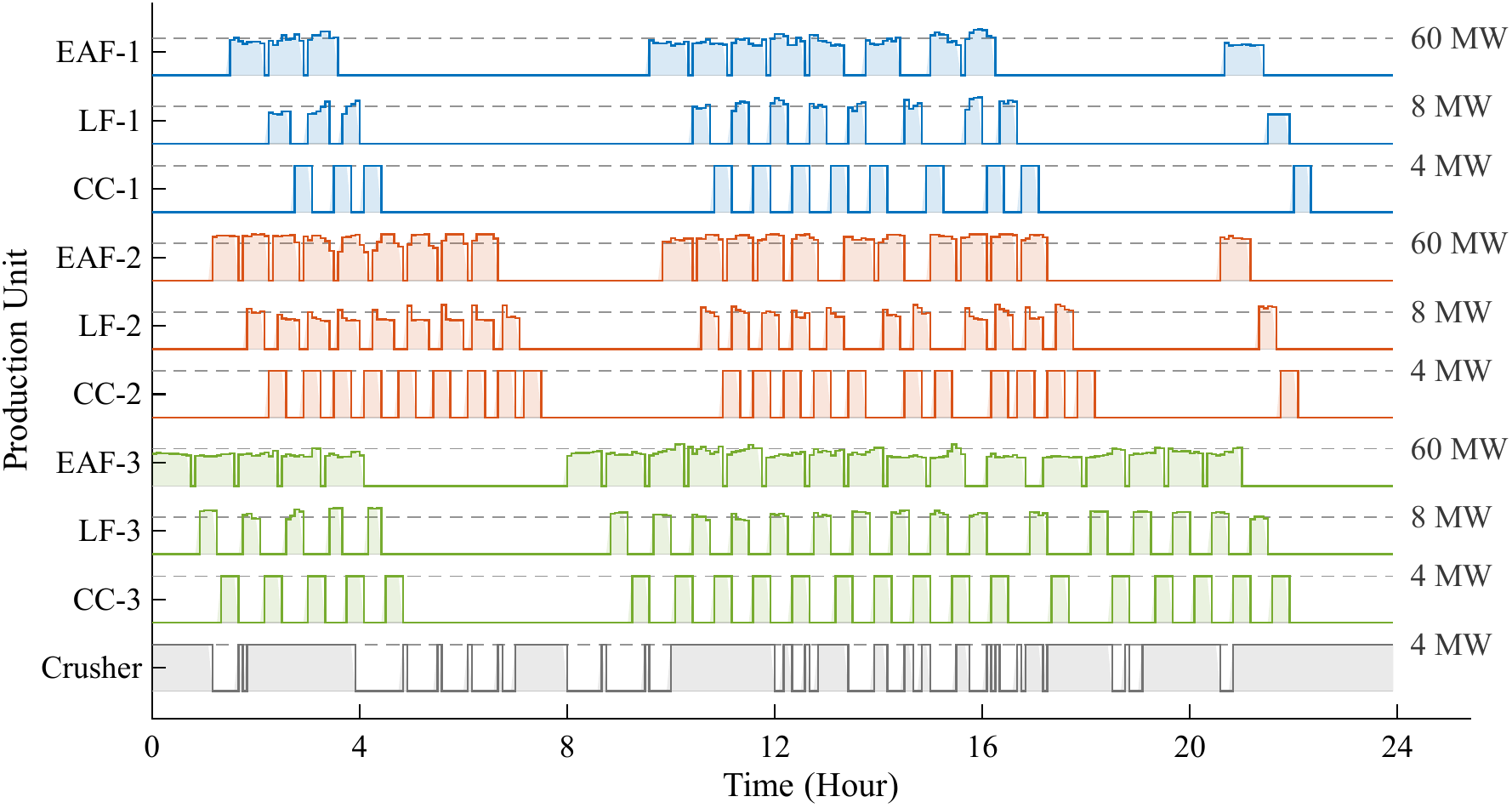}}
    \caption{Unit-level power trajectories generated by the proposed PK-SDRL policy on a representative validation day.}
    \label{Fig_Device_Level_Day_9}
    \vspace{-0.3cm}
\end{figure}

\vspace{-0.5cm}
\subsection{Ablation and Comparative Performance Evaluation}
\label{Ablation and Comparative Performance Evaluation}

\subsubsection{Ablation Study}
\label{Safety-Aware Ablation}

To examine the necessity of the PDG-AP mechanism under tightly constrained process scheduling, a standard DRL baseline is constructed by removing this mechanism while retaining the same state space, action space, reward function, network architecture, and training data as PK-SDRL. The training behavior is shown in Fig.~\ref{Raw_DRL}. Although the daily reward of DRL increases during training, the number of completed heats decreases from nearly 54 heats/day to a much lower level, while hot-metal losses remain significant. This result indicates that, without PDG-AP, the agent cannot effectively learn the inter-stage transfer timing, waiting-time limits, and mandatory process-continuity requirements only from posterior penalties. The apparent reward improvement in the later training stage is mainly attributed to fewer high-risk startups and lower electricity consumption, rather than to a feasible production schedule. In other words, the DRL baseline tends to avoid process failures through conservative under-production.

\begin{figure}[htbp] 
    \vspace{-0.1cm}
    \centerline{\includegraphics[width=1\columnwidth]{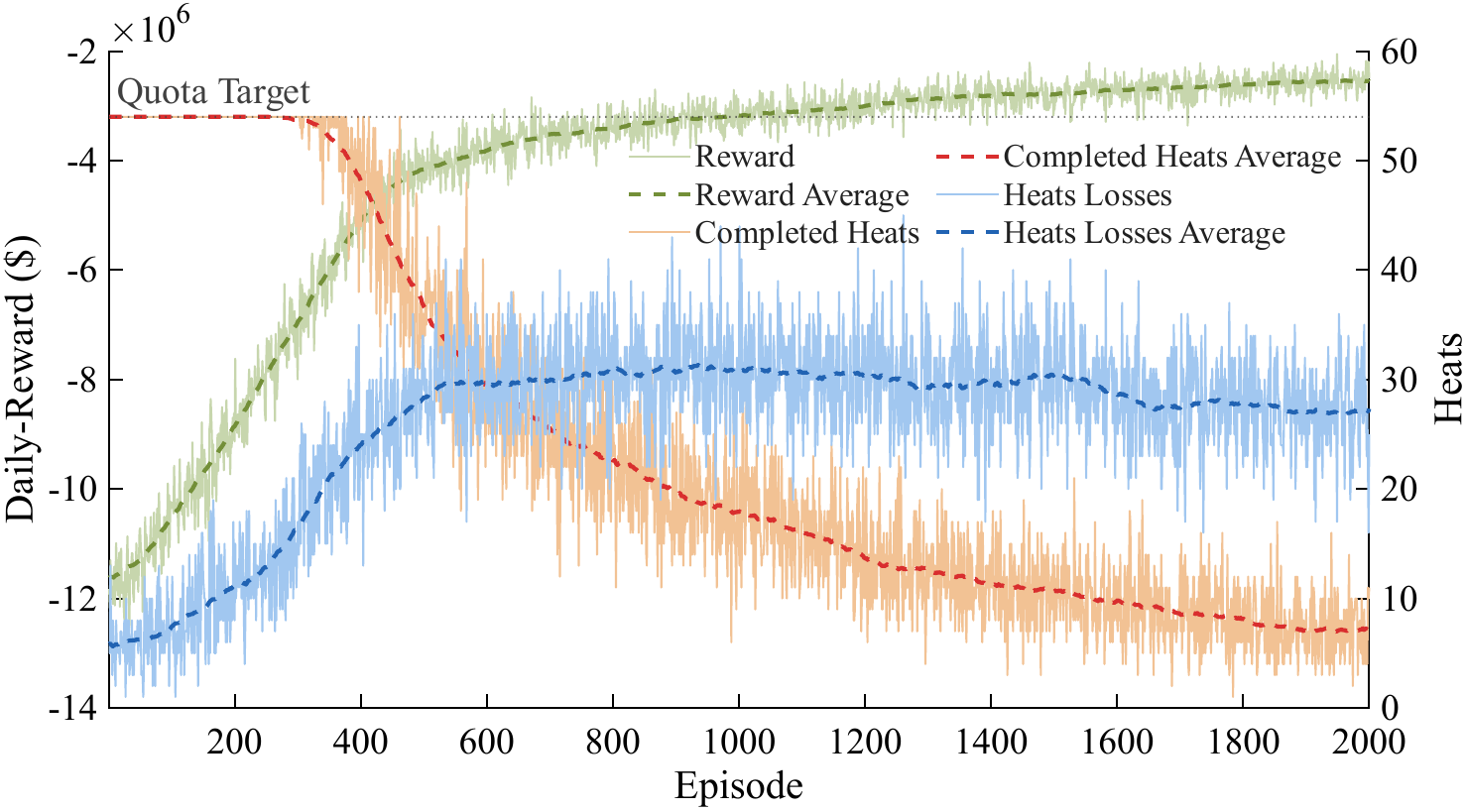}}
    \caption{Training behavior of the standard DRL baseline without PDG-AP.}
    \label{Raw_DRL}
    \vspace{-0.3cm}
\end{figure}

In contrast, as shown in Fig.~\ref{reward}, PK-SDRL achieves 100\% quota fulfillment rate under greedy validation, with the validation reward converging to a stable level. This comparison demonstrates that the PDG-AP excludes clearly infeasible actions and enforces mandatory process actions before execution, thereby restricting policy exploration to the feasible production space.

\subsubsection{Comparative Performance Evaluation}
\label{Comparative Performance Evaluation}

The proposed PK-SDRL method is further compared with a Rule-Based Policy and Rolling MILP. The Rule-Based Policy completes the production task following a fixed and uniform production pace, while Rolling MILP solves a batch-level mixed-integer scheduling problem at each time step with a 3-h prediction horizon and a 5-min rolling interval. Fig.~\ref{fig:method_comparison_hourly} presents the hourly power and electricity-cost profiles on a representative validation day. The Rule-Based Policy follows a nearly fixed production pattern and has limited capability to adjust the load distribution according to electricity prices and renewable generation, leading to high production power during some high-cost periods. Rolling MILP exhibits certain price-responsive behavior, but its short 3-h prediction horizon and local heat-completion revenue still result in temporally concentrated production, which increases the hourly electricity cost in several periods. In contrast, PK-SDRL adjusts the real-time dispatch more flexibly while satisfying the daily production target and process constraints. It allocates high-power operation more effectively to periods with lower grid-import pressure, thereby reducing the overall operating cost. The full-validation-set statistics are reported in Table~\ref{tab:comparison_results}. Compared with short-horizon Rolling MILP, PK-SDRL also avoids the need for frequent online solution of large-scale mixed-integer optimization problems and exhibits superior economic scheduling performance under dynamic electricity prices and renewable-generation fluctuations.

\begin{figure}[htbp] 
    \vspace{-0.3cm}
    \centerline{\includegraphics[width=1\columnwidth]{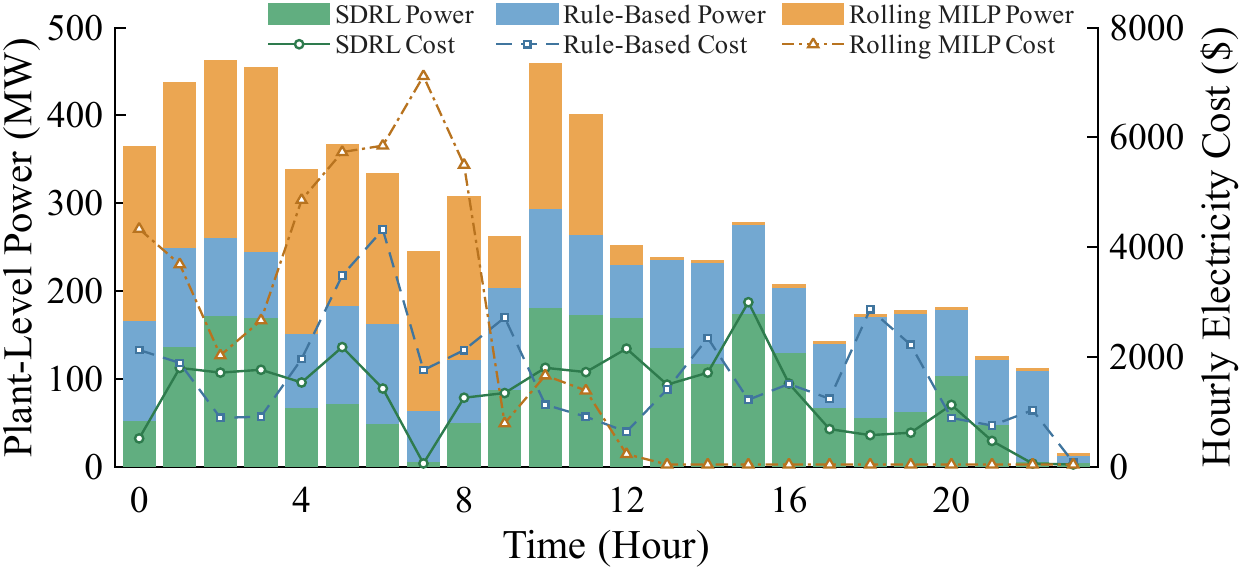}}
    \caption{Hourly power and electricity-cost comparison of different scheduling methods on a representative validation day.}
    \label{fig:method_comparison_hourly}
    \vspace{-0.3cm}
\end{figure}

\vspace{-0.2cm}
\begin{table}[htbp]
\caption{Performance Comparison of Different Scheduling Methods}
\label{tab:comparison_results}
\centering
\footnotesize
\setlength{\tabcolsep}{3pt}
\renewcommand{\arraystretch}{1.08}
\resizebox{\columnwidth}{!}{%
\begin{tabular}{@{}lcccc@{}}
\toprule
Method
& Electricity Cost
& Quota Hit 
& Process-Loss
& Decision Time \\
& /day
& Rate 
& Rate 
& /5-min step \\
\midrule
PK-SDRL
&  29422.5\$
&  100\% 
&  0 
&  0.18 ms  \\
DRL
&  \textbackslash  
&  0
&  72.1\% 
&  0.11 ms  \\
Rule-Based Policy
&  57946.3\$
&  100\% 
&  0
&  \textbackslash   \\
Rolling MILP
&  39721.2\$
&  100\%
&  0
&  9.8 s  \\
\bottomrule
\end{tabular}%
}
\vspace{-0.3cm}
\end{table}

\subsection{Sensitivity Analysis}
\label{sensitivity_analysis}

Two sensitivity studies are conducted to evaluate the effects of forecast-error
standard deviation $\sigma_{\mathrm f}$ and forecast horizon $H^{\mathrm f}=H_s\Delta t$ on
PK-SDRL. As shown in Fig.~\ref{fig:sensitivity_analysis}(a), the quota
fulfillment rate remains 100\% for $\sigma_{\mathrm f}\leq20\%$, while the cost
per heat increases with forecast uncertainty. At $\sigma_{\mathrm f}=50\%$,
severely distorted price and renewable-generation forecasts prevent the policy
from reliably identifying low-cost production windows, causing delayed
operations and reducing the completed output to approximately 50 heats on a
few validation days. The daily quota is imposed through a terminal penalty
rather than a hard constraint or a predefined end-of-day catch-up rule. Such
mechanisms would substantially restrict scheduling flexibility, while a
universally valid catch-up instant cannot be specified under heterogeneous
daily price, renewable-generation, and process-state trajectories.

Fig.~\ref{fig:sensitivity_analysis}(b) shows that increasing $H^{\mathrm f}$
reduces the cost per heat, whereas the corresponding online decision time
increases only marginally and remains well below the 5-min control interval.
The results confirm that higher forecast accuracy and longer look-ahead
information both improve economic performance. Nevertheless, the marginal
cost reduction becomes limited beyond $H^{\mathrm f}=3$~h, while reliable
forecasts at a 5-min resolution over substantially longer horizons are difficult
to obtain in practical operation.

\begin{figure}[htbp] 
    \vspace{-0.0cm}
    \centerline{\includegraphics[width=1\columnwidth]{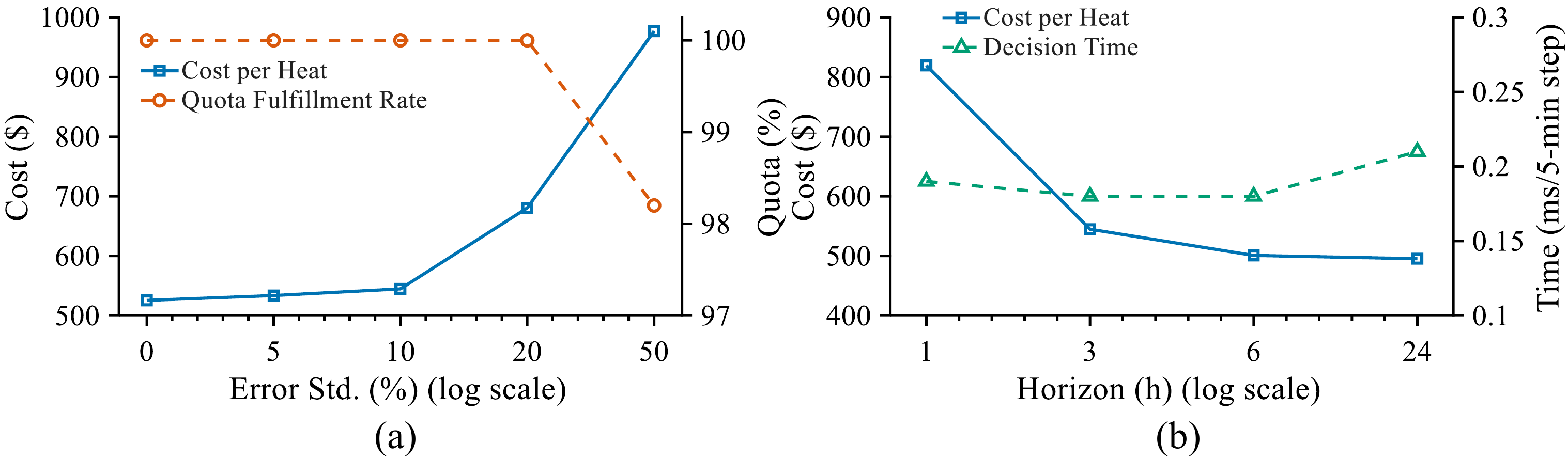}}
    \caption{Sensitivity analysis of PK-SDRL: (a) performance under different forecast-error standard deviations and (b) performance under different forecast-horizon lengths.}
    \label{fig:sensitivity_analysis}
    \vspace{-0.5cm}
\end{figure}

\section{Conclusion}\label{CONCLUSION}

In this paper, a process-knowledge-embedded safe deep reinforcement
learning framework is proposed for the 5-min real-time
dispatch of SPLs under a two-part electricity tariff and local
renewable-generation uncertainties. The proposed PDG-AP mechanism
reallocates excluded-action probabilities according to process distance
and the actor's safe-action preference, while recursive process
feasibility ensures admissible execution and feasible continuation.
The expected process-correction distance is further incorporated into
parameterized-action PPO through a process-correction budget and a
primal--dual update, and the derived bound quantifies the raw policy's
dependence on safety processing. Together with the lossless
active-frontier action space, state-transition shaping reward, and
terminal quota reward, PK-SDRL coordinates economic operation,
process advancement, and daily quota fulfillment. Case studies
on a multiline EAF--LF--CC steel plant demonstrate zero process loss. Compared with the rule-based policy and
rolling MILP, PK-SDRL reduces daily electricity costs by 49.2\% and
25.9\%, respectively, while requiring only 0.18~ms per decision step.
Sensitivity studies further demonstrate its effectiveness under
forecast errors and different forecast horizons.

Future work will extend the proposed framework to more complex
industrial processes and investigate its generalization under broader
operating conditions and uncertainty patterns.

\vspace{-0.3cm}
\appendix

\subsection{Proof of Proposition~\ref{prop:pdgam_recursive_feasibility}}
\label{appendix:pdgam_recursive_feasibility}

\begin{proof}
Consider an arbitrary reachable state $\Xi_{t-1}$ and any
$a_t^x\in\mathcal U_t^{\mathrm{s}}(\Xi_{t-1})$, with
$\Xi_t=F(\Xi_{t-1},a_t^x)$. Define the processing duration and
interstage interval as:
\begin{equation}
d_k^{\ell,s}=
\left(
t_{k,\mathrm{fn}}^{\ell,s}
-
t_{k,\mathrm{st}}^{\ell,s}
+
1
\right)\Delta t,
q_k^{\ell,s,r}=
\left(
t_{k,\mathrm{st}}^{\ell,r}
-
t_{k,\mathrm{fn}}^{\ell,s}
-
1
\right)\Delta t,
\end{equation}

\noindent where
$\underline{\tau}_k^{\ell,s}\le d_k^{\ell,s}
\le\overline{\tau}_k^{\ell,s}$ and
$\underline{\tau}_{\mathrm{tr}}^{s,r}\le q_k^{\ell,s,r}
\le\overline{\tau}_{\mathrm{tr}}^{s,r}$.
Preserve all committed operations and, for each uncommitted heat $k+1$,
select: $q_{k+1}^{\ell,\mathrm{EAF},\mathrm{LF}}
=\overline{\tau}_{\mathrm{tr}}^{\mathrm{EAF},\mathrm{LF}},\;
q_{k+1}^{\ell,\mathrm{LF},\mathrm{CC}}
=\overline{\tau}_{\mathrm{tr}}^{\mathrm{LF},\mathrm{CC}},\;
d_{k+1}^{\ell,\mathrm{LF}}
=\overline{\tau}_{k+1}^{\ell,\mathrm{LF}}.$ For any $\ell\in\mathcal L$ and consecutive heats
$k,k+1\in\mathcal K_\ell$, exclusive EAF occupancy gives:
\begin{equation}
\left(
t_{k+1,\mathrm{fn}}^{\ell,\mathrm{EAF}}
-
t_{k,\mathrm{fn}}^{\ell,\mathrm{EAF}}
\right)\Delta t
\ge
d_{k+1}^{\ell,\mathrm{EAF}}
\ge
\underline{\tau}_{k+1}^{\ell,\mathrm{EAF}}.
\end{equation}

Hence, the LF separation satisfies:
\begin{equation}
\begin{aligned}
&
\left(
t_{k+1,\mathrm{st}}^{\ell,\mathrm{LF}}
-
t_{k,\mathrm{fn}}^{\ell,\mathrm{LF}}
-
1
\right)\Delta t\\
&=
\left(
t_{k+1,\mathrm{fn}}^{\ell,\mathrm{EAF}}
-
t_{k,\mathrm{fn}}^{\ell,\mathrm{EAF}}
\right)\Delta t
+
q_{k+1}^{\ell,\mathrm{EAF},\mathrm{LF}}
-
q_k^{\ell,\mathrm{EAF},\mathrm{LF}}
-
d_k^{\ell,\mathrm{LF}}\\
&\ge
\underline{\tau}_{k+1}^{\ell,\mathrm{EAF}}
-
\overline{\tau}_{k}^{\ell,\mathrm{LF}}
\ge 0.
\end{aligned}
\label{eq:appendix_lf_nonoverlap}
\end{equation}

Similarly, the CC separation satisfies:
\begin{equation}
\begin{aligned}
&
\left(
t_{k+1,\mathrm{st}}^{\ell,\mathrm{CC}}
-
t_{k,\mathrm{fn}}^{\ell,\mathrm{CC}}
-
1
\right)\Delta t\\
&=
\left(
t_{k+1,\mathrm{fn}}^{\ell,\mathrm{EAF}}
-
t_{k,\mathrm{fn}}^{\ell,\mathrm{EAF}}
\right)\Delta t
+
q_{k+1}^{\ell,\mathrm{EAF},\mathrm{LF}}
-
q_k^{\ell,\mathrm{EAF},\mathrm{LF}}\\
&\quad+
d_{k+1}^{\ell,\mathrm{LF}}
-
d_k^{\ell,\mathrm{LF}}
+
q_{k+1}^{\ell,\mathrm{LF},\mathrm{CC}}
-
q_k^{\ell,\mathrm{LF},\mathrm{CC}}
-
d_k^{\ell,\mathrm{CC}}\\
&\ge
\underline{\tau}_{k+1}^{\ell,\mathrm{EAF}}
+
\overline{\tau}_{k+1}^{\ell,\mathrm{LF}}
-
\overline{\tau}_{k}^{\ell,\mathrm{LF}}
-
\overline{\tau}_{k}^{\ell,\mathrm{CC}}
\ge 0.
\end{aligned}
\label{eq:appendix_cc_nonoverlap}
\end{equation}

Therefore, the backup continuation causes no EAF, LF, or CC conflict,
and:
\begin{equation}
a_{t+1}^{x,\mathrm{bk}}
\in
\mathcal U_{t+1}^{\mathrm{s}}(\Xi_t),
\quad
\mathcal U_{t+1}^{\mathrm{s}}(\Xi_t)\neq\varnothing.
\end{equation}

Since this holds for every
$a_t^x\in\mathcal U_t^{\mathrm{s}}(\Xi_{t-1})$ and
$\tilde{\pi}_{\theta}^{x}$ is supported on
$\mathcal U_t^{\mathrm{s}}(\Xi_{t-1})$:
\begin{equation}
\begin{aligned}
&\Pr\!\left(
\left\{
a_t^x\in\mathcal U_t^{\mathrm{s}}(\Xi_{t-1})
\right\}
\cap
\left\{
\mathcal U_{t+1}^{\mathrm{s}}(\Xi_t)\neq\varnothing
\right\}
\,\middle|\,
\mathbf s_t
\right)\\
&=
\sum_{a\in\mathcal U_t^{\mathrm{s}}(\Xi_{t-1})}
\tilde{\pi}_{\theta}^{x}(a\mid\mathbf s_t)
=1.
\end{aligned}
\end{equation}

\noindent which proves Proposition~\ref{prop:pdgam_recursive_feasibility}.
\end{proof}

\bibliographystyle{IEEEtran}
\bibliography{VESS}

\end{document}